\newtheorem{theorem}{Theorem}[section]
\newtheorem{definition}[theorem]{Definition}
\newtheorem{proposition}[theorem]{Proposition}
\newtheorem{remark}[theorem]{Remark}
\newcommand{\aaa}{\mathfrak{a}}
\newcommand{\ad}{\aaa_2}
\newcommand{\adc}{\aaa_2^c}
\newcommand{\adf}{\aaa_2^f}
\newcommand{\adgd}{\aaa_2^{(2)}}
\newcommand{\adgu}{\aaa_2^{(1)}}
\newcommand{\as}{\alpha}
\newcommand{\au}{\mathbf a_1}
\newcommand{\ba}{\begin{array}}
\newcommand{\bb}{\beta}
\newcommand{\bc}{\begin{center}}
\newcommand{\bdo}{{\mathfrak B}_{12}}
\newcommand{\be}{\begin{equation}}
\newcommand{\bea}{\begin{equation}\begin{array}}
\newcommand{\beas}{\begin{equation*}\begin{array}}
\newcommand{\bef}{\begin{flalign}}
\newcommand{\befs}{\begin{flalign*}}
\newcommand{\ben}{\begin{enumerate}}
\newcommand{\benal}[1]{\begin{enumerate}[label={#1}\alph*)]}
\newcommand{\benar}[1]{\begin{enumerate}[label={#1}\arabic*)]}
\newcommand{\benro}[1]{\begin{enumerate}[label={#1}\roman*)]}
\newcommand{\benRo}[1]{\begin{enumerate}[label={#1}\Roman*)]}
\newcommand{\bes}{\begin{equation*}}
\newcommand{\bfit}[1]{\bf\emph{#1}}
\newcommand{\bit}{\begin{itemize}}
\newcommand{\borc}{\mathfrak B}
\newcommand{\Bp}{{\mathfrak B}^+}
\newcommand{\cc}{\mathbb{C}}
\newcommand{\dedo}{\mathbf{de_{12}}}
\newcommand{\dotto}{\mathbf{d_8}}
\newcommand{\ea}{\end{array}}
\newcommand{\ec}{\end{center}}
\newcommand{\ed}{\mathbf{e_{10}}}
\newcommand{\edo}{\mathbf{e_{12}}}
\newcommand{\ee}{\end{equation}}
\newcommand{\eea}{\end{array}\end{equation}}
\newcommand{\eeas}{\end{array}\end{equation*}}
\newcommand{\eef}{\end{flalign}}
\newcommand{\eefs}{\end{flalign*}}
\newcommand{\een}{\end{enumerate}}
\newcommand{\ees}{\end{equation*}}
\newcommand{\eit}{\end{itemize}}
\newcommand{\eno}{\mathbf{e_9}}
\newcommand{\eo}{\mathbf{e_8}}
\newcommand{\ep}{\varepsilon}
\newcommand{\eps}{\varepsilon}
\newcommand{\es}{\mathbf{e_6}}
\newcommand{\fieo}{\Phi_8}
\newcommand{\fio}{\Phi_{\cal O}}
\newcommand{\fis}{\Phi_{\cal S}}
\newcommand{\ga}{{\mathfrak g}(A)}
\newcommand{\gas}{{\mathfrak g}_\as}
\newcommand{\ggo}{\mathfrak{g}_0}
\newcommand{\ggu}{\textbf{${\mathfrak g}_{\mathsf u}$}}
\newcommand{\gh}{\gamma}
\newcommand{\hei}{\text{ht}}
\newcommand{\hh}{\mathfrak{h}}
\newcommand{\hhs}{{\mathfrak h}^*}
\newcommand{\ical}{\mathcal I}
\newcommand{\impl}{\ \Rightarrow\ }
\newcommand{\iinu}{\text{II}_{9,1}}
\newcommand{\imi}{\mathbf{i}\, }
\newcommand{\iu}{\imi}
\newcommand{\km}{k_{\text{-}1}}
\newcommand{\kop}{k_{0'}}
\newcommand{\kopp}{k_{0''}}
\newcommand{\lam}{\lambda}
\newcommand{\Ll}{\mathbb L}
\newcommand{\Lleo}{{\mathbb L}_{\eo}}
\newcommand{\ncm}{\nu^\prime_\chi}
\newcommand{\ncp}{\bar\nu^\prime_\chi}
\newcommand{\nem}{\nu^\prime_e}
\newcommand{\nep}{\bar\nu^\prime_e}
\newcommand{\nn}{\mathbb N}
\newcommand{\nm}{{\mathfrak n}_-}
\newcommand{\nmm}{\nu^\prime_\mu}
\newcommand{\nmp}{\bar\nu^\prime_\mu}
\newcommand{\np}{{\mathfrak n}_+}
\newcommand{\ntm}{\nu^\prime_\tau}
\newcommand{\ntp}{\bar\nu^\prime_\tau}
\newcommand{\php}[1]{\phantom{#1}}
\newcommand{\pt}{\tilde p}
\newcommand{\rmu}{\as_{\text{-}1}}
\newcommand{\rop}{\as_{0'}}
\newcommand{\ropp}{\as_{0''}}
\newcommand{\rr}{\mathbb{R}}
\newcommand{\sdel}{\tilde\delta}
\newcommand{\spin}{\mathbf{su(2)^{spin}}}
\newcommand{\sref}[1]{{\bf\ref{#1}}}
\newcommand{\sud}{\mathbf{su(2)}}
\newcommand{\um}{{\scriptstyle \frac12}}
\newcommand{\zz}{\mathbb Z}
\numberwithin{equation}{section}
\begin{document}

\begin{titlepage}

\vskip 2.0 cm
\begin{center}  {\Huge{\bf Space, Matter and Interactions\\\vskip 0.2 cm in a Quantum Early Universe}} \\\vskip 0.5 cm {\huge{\bf Part I : Kac-Moody and Borcherds Algebras}}

\vskip 2.5 cm

{\Large{\bf Piero Truini$^{1,2}$},  {\bf Alessio Marrani$^{3,4}$},\\ {\bf Michael Rios$^{2}$}, {\bf Klee Irwin$^{2}$}}

\vskip 1.0 cm

$^1${\sl INFN, sez. di Genova, via Dodecaneso 33, I-16146 Genova, Italy\\
	\texttt{piero.truini@ge.infn.it}}\\

\vskip 0.5
cm

$^2${\sl
	QGR, 101 S. Topanga Canyon Rd., 1159 Los Angeles,CA 90290, USA}\\

\vskip 0.5 cm

$^3${\sl Centro Studi e Ricerche Enrico Fermi, via Panisperna 89A,
I-00184, Roma, Italy}\\

\vskip 0.5 cm

$^4${\sl Dipartimento di Fisica
e Astronomia Galileo Galilei, Universit\`a di Padova,\\and INFN, sezione
di
Padova, Via Marzolo 8, I-35131 Padova, Italy\\
\texttt{jazzphyzz@gmail.com}}\\

\vskip 0.5 cm

\vskip 0.5 cm

 \end{center}

 \vskip 4.0 cm

\begin{%
abstract}

We introduce a quantum model for the Universe at its early stages, formulating a mechanism for the expansion of space and matter from a quantum initial condition, with particle interactions and creation driven by algebraic extensions of the Kac-Moody Lie algebra $\eno$. We investigate Kac-Moody and Borcherds algebras, and we propose a generalization that meets further requirements that we regard as fundamental in quantum gravity.

 \end{abstract}
\vspace{24pt} \end{titlepage}


\newpage \tableofcontents \newpage


\section{Introduction}

This is the first of two papers -see also \cite{mym2}- describing an
algebraic model of quantum gravity.\newline
We believe it is a fresh and original attempt to discuss quantum gravity,
based only on the fundamental principles of quantum physics and on well
tested experimental evidence, without being biased by any other model or
theory. In this first paper, we provide an introduction to the foundations
of the model, and we start investigating the mathematical structures that
suit our purpose. In the second paper, \cite{mym2}, we will deal with a
physical model relying on a particular infinite dimensional algebra.\newline

The Lie algebra at the core of our model has the following features and
interpretation:

\begin{enumerate}
\item it is an infinite dimensional Lie algebra extending $\eo$, that is
regarded as the internal-quantum-number subalgebra, meaning that the $\eo$
roots represent charges and spin of elementary particles;

\item its root lattice is Lorentzian;

\item the subspace of the lattice complementary to that of $\eo$ is
interpreted as momentum space.
\end{enumerate}

\begin{remark}\label{rdg}
	The Lie algebra $\eo$ has been considered by many as a possible algebra for grand-unification as well as for quantum gravity. It has then been considered not suitable after the no-go theorem by Distler and Garibaldi, \cite{dg}. We will show in section \sref{stdm} how $\eo$ fulfills the requirements for standard model degrees of freedom and algebras, which seems to contradict the thesis of Distler and Garibaldi. We underline here that it does not, since the hypothesis denoted TOE1 by the authors of \cite{dg}, in particular the fact that the algebra of the standard model centralizes $\mathbf{sl(2,\cc)}$, not only does not apply, but actually needs not to do so, as it will become obvious in the development of section \sref{stdm}.
\end{remark}

Algebraic methods are extensively used and successfully exploited in string
theory and conformal field theory in two dimensions, through the concept of
vertex operator algebras, \cite{belavin}-\nocite{verl}\cite{huang}, in order
to describe the interactions between different strings, localized at
vertices, analogously to the Feynman diagram vertices. Mathematically, the
underlying concept of a vertex algebra was introduced by Borcherds, \cite%
{borch}-\nocite{frenk}\cite{kacvoa}, in order to prove the \textit{Monstrous
Moonshine} conjecture, \cite{conway1}.

The algebras used in this paper may be regarded as \textit{vertex operator
algebras} in a broader sense, since they are characterized by interaction
operators that look like generators of a Lie algebra, and whose product
depends upon parameters related to the spacetime creation and expansion. The
Lie algebra acts \textit{locally}, but it is immersed in a wider,
vertex-type algebra by means of a mechanism which creates a discrete quantum
spacetime.

The Pauli exclusion principle is fulfilled by turning the algebra into a Lie
superalgebra, using the \textit{Grassmann envelope}.

The resulting model is thus intrinsically relativistic, both because of the
way spacetime expands and because the Poincar\'{e} group acts locally on the
Lie algebra. Furthermore, the conservation of charge and momentum is a
consequence of the Lie product, and in this respect they are treated at the
same level.

\subsection{{\Large $\ggu$}, a Lie Algebra for Quantum Gravity}

At a very fundamental level, we make the following assumptions on quantum
gravity, founded on the current theoretical and experimental knowledge in
physics.

\benar{QG.}
\item gravity is a characteristic of spacetime;

\item spacetime is \textit{dynamical} and related to matter.
Therefore, we assume that it emerges from the existence of particles and
their interactions. There is no way of defining distances and time lapses
without interactions, so that the creation and expansion of spacetime is
itself a rule followed by particle interactions;

\item a suitable mathematical structure at the core of the
description of quantum gravity is that of an algebra, which we will
henceforth denote by $\ggu$, whose generators represent the particles and
whose product yields the building blocks of the interactions (let us call
them \textit{elementary interactions}). As a consequence, the interactions
are endowed with a tree structure, thus opening the way for a description of
scattering amplitudes in terms of what we would call \textit{gravitahedra},
providing a generalization of the associahedra and permutahedra in the
current theory of scattering, \cite{marni}-\nocite{nima1}\nocite%
{miz}\nocite{nima2}\nocite{stash1}\nocite{stash2}\nocite{tonks}\cite{loday}.
The structure constants of the algebra determine the quantum amplitudes of
the elementary interactions; in particular, we assume $\ggu$ to be a Lie
algebra, because it enables to derive the fundamental conservation laws
observed in physics directly from the action of the generators as
derivations (Jacobi identity); as in the theory of fields, the interactions
may only occur locally, point by point in the expanding spacetime, that can
therefore be viewed as a parameter on which the algebra product depends;

\item in agreement with the theory of a big bang, strongly supported
by the current observations, we assume the existence of an initial quantum
state, mathematically represented by an element of the universal enveloping
algebra of $\ggu$. Such an element is made of generators that can all
interact among themselves, thus yielding the first geometrical
interpretation: that of a point where particles may interact;

\item a particle has a certain probability amplitude to interact but
also not to interact, in which case it expands as described in section \ref%
{shift};

\item particles are quantum objects, hence their existence through
interactions occurs with certain amplitudes. Therefore, \textbf{spacetime
acquires} \textbf{a quantum structure}: a point in space and time is where
particles are present with a certain amplitude and may interact. The
amplitude related to the quantum spacetime point is the sum of the
amplitudes for particles to be there. Consequently, the fact that
gravitation appears as an attractive force has to be explained through
amplitudes and their interference;

\item there is a universal clock, which provides the ordering of the
finite and discrete number of interactions. The expansions are also
countable, hence discrete: the structure of spacetime that emerges is
discrete and finite, as is the Universe and the quantum theory describing
it. There is no divergence of any sort: quantum field theory in the
continuum, with its divergences and related renormalizations, is an
approximation that may be useful for calculations long after the big bang;

\item the finiteness of the expanding Universe, and thus the absence
of spacetime beyond it, affects the quantum initial state of particles,
which are not free to move on the spacetime \textit{stage} but are bound as
if they were surrounded by infinitely high barriers. The steady state of
such a particle is a \textbf{superposition of states with} \textbf{opposite
3-momenta}, representing an object that moves simultaneously in opposite
directions, where by 3-momentum we denote the spatial component of
4-momentum.
\een

\subsection{Expansion\label{shift}}

The assignment of opposite 3-momenta is inherent to the quantum behavior of
a particle in a box, in which the square of the momentum, but not the
momentum itself, has a definite value in a stationary state. In standard
relativistic and non-relativistic quantum mechanics, the ground state is a
superposition of \textit{generalized states} with opposite momenta $e^{\iu%
(kx+a)}$ and $e^{-\iu(kx+a)}$. We maintain the same energy and start
enlarging the box on opposite sides along the direction of $\vec{p}$ in
steps of $|\vec{p}|/E$ in Planck units, so that a massless particle travels
at velocity $c$ and a massive one slower than that. We get a wave
proportional to $\sin (\dfrac{\pi }{2}(x/a+n))$, for $a=1=|\vec{p}|/E$ and $%
n=1,2,3,...$.

The wave function for the first four expansions is shown in Fig. \ref%
{step5B}. We take the discretized picture of the sine function maxima and
minima, the dots, with cosmological time $t=n-1$.

\begin{figure}[hbtp]
	\centering%
	\begin{tikzpicture}
	\draw[thick,->] (-1.2,1) -- node[above]{$\vec p/E$} (-2.2,1);
	\draw[->] (-5,0) -- node[near end, right]{amp.} (-5,1.5);
	\draw (-6,0) node[above]{$t=0$} -- (6,0);
	\draw[dashed] (-1,0) sin (0,1) cos (1,0);
	\draw[fill,blue] (0,1)circle [radius=0.05cm];
	\draw (-6,-1.8) node[above]{$t=1$} -- (6,-1.8);
	\draw[dashed] (-2,-1.8) sin (-1,-1) cos (0,-1.8) sin (1,-2.6) cos (2,-1.8);
	\draw[fill,blue] (-1,-1)circle [radius=0.05cm];
	\draw[fill,blue] (1,-2.6)circle [radius=0.05cm];
	\draw (-6,-3.5) node[above]{$t=2$} -- (6,-3.5);
	\draw[dashed] (-3,-3.5) sin (-2,-2.9) cos (-1,-3.5) sin (0,-4.1) cos (1,-3.5) sin (2,-2.9) cos (3,-3.5);
	\foreach \x in {-2,2}
	\draw[fill,blue] (\x,-2.9)circle [radius=0.05cm];
	\draw[fill,blue] (0,-4.1)circle [radius=0.05cm];
%
	\draw (-6,-5) node[above]{$t=3$} -- (6,-5);
	\draw[dashed] (-4,-5) sin (-3,-4.6) cos (-2,-5) sin (-1,-5.4) cos (0,-5) sin (1,-4.6) cos (2,-5) sin (3,-5.4) cos (4,-5);
	\foreach \x in {-3,1}
	\draw[fill,blue] (\x,-4.6)circle [radius=0.05cm];
	\foreach \x in {-1,3}
	\draw[fill,blue] (\x,-5.4)circle [radius=0.05cm];
	\end{tikzpicture}
	\caption{Expansion of a particle (blue dot) along $\vec p$ (x-axis) with amplitudes (y-axis)}\label{step5B}
\end{figure}
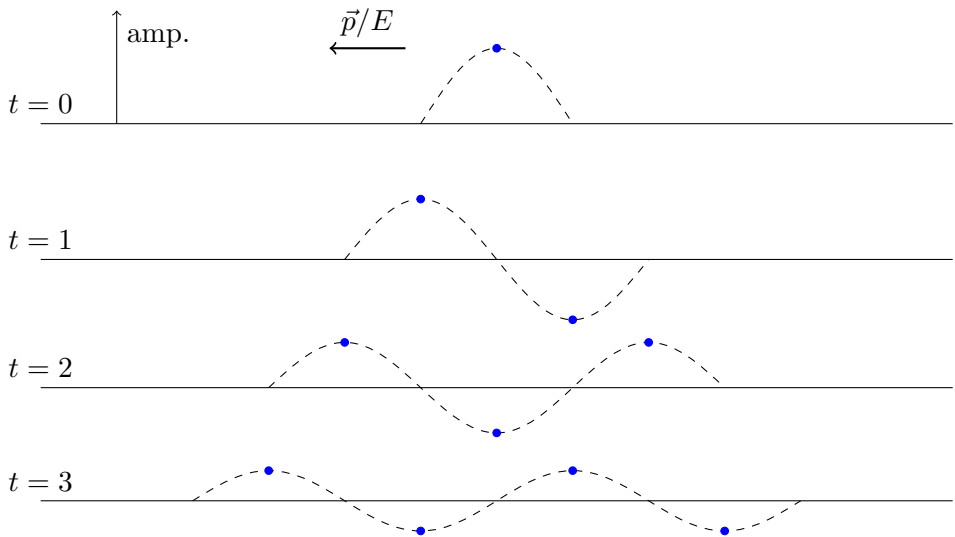

The amplitude acquires also a time dependent phase $e^{\iu Et}$ that makes
it complex.

\subsection{Fermions and Bosons}

The Lie algebras considered in this paper contain $\eo$, and thus $\dotto$.
Under the adjoint action of $\dotto$, the generators of $\eo$ split into
spinorial and non-spinorial ones, providing the algebra with a 2-graded
structure. We give the spinorial generators the physical meaning of
fermions, whereas the non-spinorial generators will be given the physical
meaning of bosons, in order to automatically comply with the addition of
angular momenta.

On the other hand, the Pauli exclusion principle is embodied in the \textit{%
Grassmann envelope} that turns the 2-graded algebra into a Lie superalgebra.
The degrees of freedom of the spin-1/2 fermions originate from the
superposition of opposite 3-momenta and the corresponding change of helicity
caused by the reflection at the space boundary. The Poincar\'{e} group is
then naturally emerging as a group of transformations of the \textit{local}
algebra $\ggu$ leaving the charges invariant.

All these topics will be treated in the companion paper \cite{mym2}

\subsection{Quantum Quasicrystal}

The expansion of the space that we propose has two fundamental features:

\begin{enumerate}
\item a space point may exist with a certain \textit{probability amplitude},
this latter being the sum of the amplitudes for some particles - matter or
radiation - to be there: no space point can possibly be empty;

\item space is a quantum object, that expands according to algebraic rules.
\end{enumerate}

Because of these two features our model of the Universe can be conceived as
a \textit{quantum quasicrystal}, \cite{QC1}\cite{QC2}\cite{QC3}.

\section{$\eo$, the Charge/Spin Subalgebra}

In our treatment, we use the following labels for the Dynkin diagram of $\eo$%
: \be
\setlength{\unitlength}{3pt}
\begin{picture}(85,15)
\linethickness{0.3mm}
\multiput(2,5)(8,0){7}{\circle{2}}
\put(34,13){\circle{2}}
\put(33.95,6){\line(0,1){6}}
\multiput(3,5)(8,0){6}{\line(1,0){6}}
\put(1,0){$\as_1$}
\put(9,0){$\as_2$}
\put(17,0){$\as_3$}
\put(25,0){$\as_4$}
\put(33,0){$\as_5$}
\put(41,0){$\as_7$}
\put(49,0){$\as_8$}
\put(36,12){$\as_6$}
\end{picture}
\label{dynkeo}
\ee

A way of writing the simple roots of $\eo$ in the orthonormal basis $%
\{k_{1},...,k_{8}\}$ of $\rr^{8}$ is:%
\begin{eqnarray}
\alpha _{1} &=&k_{1}-k_{2}\newline
;  \notag \\
\alpha _{2} &=&k_{2}-k_{3}\newline
;  \notag \\
\alpha _{3} &=&k_{3}-k_{4}\newline
;  \notag \\
\alpha _{4} &=&k_{4}-k_{5}\newline
;  \label{rsr8} \\
\alpha _{5} &=&k_{5}-k_{6}\newline
;  \notag \\
\alpha _{6} &=&k_{6}-k_{7}\newline
;  \notag \\
\alpha _{7} &=&k_{6}+k_{7}\newline
;  \notag \\
\alpha _{8} &=&-\frac{1}{2}\left(
k_{1}+k_{2}+k_{3}+k_{4}+k_{5}+k_{6}+k_{7}+k_{8}\right) .  \notag
\end{eqnarray}%
The whole root system $\fieo$ of $\eo$ (obtained from the simple roots by
Weyl reflections) can be written as follows:
\begin{equation}
\begin{array}{ll}
\Phi _{8}=\Phi _{\mathcal{O}}\cup \Phi _{\mathcal{S}}~\text{(}24\text{%
0~roots)}; &  \vspace{2pt}\\
\Phi _{\mathcal{O}}:=\left\{ \pm k_{i}\pm k_{j}|1\leqslant i<j\leqslant
8\right\} , & 4\binom{8}{2}=112; \vspace{2pt}\\
\Phi _{\mathcal{S}}:=\left\{ \frac{1}{2}\left( \pm k_{1}\pm k_{2}\pm ...\pm
k_{7}\pm k_{8}\right) ,~\text{even~\#~of~}+\right\} , & 2^{7}=128.%
\end{array}%
\end{equation}%
The first set $\fio$ of 112 roots is the set of roots of $\dotto\simeq
\mathbf{so(16,\cc)}$. The set $\fis$ is a Weyl spinor of $\dotto$ with
respect to the adjoint action (every orthogonal Lie algebra in even
dimension $\mathbf{d_{n}}\simeq \mathbf{so(2n,\cc)}$ has a Weyl spinor
representation of dimension $2^{n-1}$).

If $\as$ is a root, there is a unique way of writing it as $\as=\sum \lam_{i}%
\as_{i}$ where the $\as_{i}$'s are simple (in fact, all $\lam_{i}$'s are
positive for positive roots, and negative for negative roots). The sum $\hei(%
\as):=\sum \lam_{i}$ is called the {\bfit{height}} of $\as$.

The fact that the roots of $\fio$ are the roots of a subalgebra and those of
$\fis$ correspond to a representation of it can be seen by noticing that: $%
\fio+\fio\subset \fio$, $\fio+\fis\subset \fis$. Moreover, $\fis+\fis\subset %
\fio$ implies that $\fio$ is embedded into $\eo$ in a symmetric way.

Thus, one can consistently define a non-Cartan generators $x_{\as}$ bosonic
if $\as\in \fio$ and fermionic if $\as\in \fis$. We also call \textit{%
fermionic} or \textit{bosonic} the root $\as$ associated to a fermionic
(resp. bosonic) non-Cartan generator $x_{\as}$. A Cartan generator $h_{\as}$
is always bosonic for any $\as$ since $h_{\as}=[x_{-\as},x_{\as}]$. The
roots of $\eo$ split into 128 fermions (F) and 112 bosons (B).

\subsection{Algebraic Structure \label{seo}}

The $\eo$ algebra can be defined from its root system $\Phi _{8}$, \cite%
{carter, hum, graaf}, over the complex field extension $\mathbb{C}$ of the
rational integers $\mathbb{Z}$ in the following way:

\benal{}
\item we select the set of simple roots $\Delta _{8}$ of $\Phi _{8}$;

\item we select a basis $\left\{ h_{1},...,h_{8}\right\} $ of the
8-dimensional vector space $\mathfrak{h}$ over $\mathbb{C}$ and set $%
h_{\alpha }=\sum_{i=1}^{8}\lambda _{i}h_{i}$ for each $\alpha \in \Phi _{8}$
such that $\alpha =\sum_{i=1}^{8}\lambda _{i}\alpha _{i}$;

\item we associate to each $\alpha \in \Phi _{8}$ a one-dimensional
vector space $L_{\alpha }$ over $\mathbb{C}$ spanned by $x_{\alpha }$;

\item we define $\eo=\mathfrak{h}\bigoplus\limits_{\alpha \in \Phi
_{8}}L_{\alpha }$ as a vector space over $\mathbb{C}$;

\item we give $\eo$ an algebraic structure by defining the following
multiplication on the basis $\{h_{1},...,h_{8}\}\cup \{x_{\as}\ |\ \as\in
\Phi _{8}\}$, by linearity to a bilinear multiplication $\eo\times \eo%
\rightarrow \eo$:
\be\begin{array}{ll}
	&[h_i,h_j] = 0 \ , \ 1\le i, j  \le 8 \\
	&[h_i , x_\as] = - [x_\as , h_i] = (\as, \as_i )\, x_\as \ , \ 1\le i \le 8 \ , \ \as \in \Phi_8 \\
	&[x_\as, x_{-\as} ] = - h_\as\\
	&[x_\as,x_\bb] = 0 \ \text{for } \as, \bb \in \Phi_8 \ \text{such that } \as + \bb \notin			 	\Phi_8 \ \text{and } \as \ne - \bb\\
	&[x_\as,x_\bb] = \ep (\as , \bb)\,  x_{\as+\bb}\ \text{for } \as , \bb \in \Phi_8 \ \text{such that }  \as+ \bb \in \Phi_8\\
\end{array} \label{comrel}\ee

where $\ep(\as,\bb)$ is the \textit{asymmetry function}%
, introduced in \cite{kac} as in Definition \sref{asym}, see also \cite{graaf}.
\een

\begin{definition}\label{asym} Let $\Lleo$ denote the lattice of all linear combinations of the simple roots with integer coefficients
	\be
	\Lleo = \left\{ \sum_{i=1}^8 c_i \as_i \ |\ c_i \in \zz \ , \ \as_i \in \Delta_8 \right\}
	\label{lattice}
	\ee
	the asymmetry function $\ep (\as , \bb) : \ \Lleo \times \Lleo \to \{-1,1\}$ is defined by:
	\be\label{epsdef}
	\ep (\as , \bb) = \prod_{i,j=1}^8 \ep (\as_i , \as_j)^{\ell_i m_j} \quad \text{for } \as = \sum_{i=1}^8 \ell_i\as_i \ ,\ \bb = \sum_{j=1}^8 m_j \as_j
	\ee
	where $\as_i , \as_j \in \Delta_8$ and
	\be
	\ep (\as_i , \as_j) = \left\{
	\begin{array}{ll}
		-1 & \text{if } i=j\\ \\
		-1 & \text{if } \as_i + \as_j  \text{ is a root and } \as_i < \as_j\\ \\
		+ 1 & \text{otherwise}
	\end{array}
	\right.
	\ee
\end{definition}

We recall the following standard result on the roots of $\eo$ (normalized to
2), \cite{carter},\cite{hum}:

\begin{proposition} \label{sproots}
	For each $\as,\bb \in \Phi_8$ the scalar product $(\as,\bb) \in \{\pm 2, \pm 1, 0\}$; $\as + \bb$ ( respectively $\as - \bb$) is a root if and only if $(\as , \bb) = -1$ (respectively $+1$); if both $\as+\bb$ and  $\as-\bb$ are not in $\Phi_8\cup\{0\}$ then  $(\as,\bb)=0$.\\
	For $\as , \bb \in \Phi_8$ if $\as+\bb$ is a root then $\as - \bb$ is not a root.
\end{proposition}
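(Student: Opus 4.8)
The plan is to deduce all four assertions from two structural facts about $\eo$ as a simply-laced root system: every root $\gamma \in \fieo$ satisfies $(\gamma,\gamma)=2$, and for all $\as,\bb \in \fieo$ the Cartan integer $\langle\as,\bb^\vee\rangle = 2(\as,\bb)/(\bb,\bb) = (\as,\bb)$ is an integer. I take these, together with the standard unbroken root-string lemma, as known (cf. Carter and Humphreys, cited above).

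First, for $(\as,\bb)\in\{0,\pm1,\pm2\}$: Cauchy--Schwarz gives $(\as,\bb)^2 \le (\as,\as)(\bb,\bb) = 4$, and since $(\as,\bb)$ is an integer it must lie in $\{0,\pm1,\pm2\}$; moreover $(\as,\bb) = \pm2$ forces equality in Cauchy--Schwarz, hence $\bb = \pm\as$ with the sign matching that of $(\as,\bb)$. Next, the ``only if'' halves of the second assertion are norm computations: if $\as+\bb\in\fieo$ then $2 = (\as+\bb,\as+\bb) = (\as,\as)+2(\as,\bb)+(\bb,\bb) = 4+2(\as,\bb)$, forcing $(\as,\bb)=-1$; likewise $\as-\bb\in\fieo$ forces $(\as,\bb)=+1$. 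For the ``if'' halves, suppose $(\as,\bb)=-1$; then $\bb\ne\pm\as$ by the previous paragraph, so the $\as$-string through $\bb$ is an unbroken chain $\bb-p\as,\dots,\bb+q\as$ of roots with $p,q\ge 0$ and $p-q = \langle\bb,\as^\vee\rangle = (\as,\bb) = -1$, whence $q = p+1 \ge 1$ and $\as+\bb\in\fieo$. The case $(\as,\bb)=+1$ is symmetric, using $\fieo = -\fieo$ to pass from $\bb-\as\in\fieo$ to $\as-\bb\in\fieo$.

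The last two assertions then follow formally. If neither $\as+\bb$ nor $\as-\bb$ lies in $\fieo\cup\{0\}$, then by the second assertion $(\as,\bb)\notin\{-1,+1\}$, while $(\as,\bb)=2$ would give $\bb=\as$ and hence $\as-\bb=0$, and $(\as,\bb)=-2$ would give $\as+\bb=0$; so $(\as,\bb)=0$. Finally, if $\as+\bb\in\fieo$ then $(\as,\bb)=-1\ne+1$, so $\as-\bb\notin\fieo$. I expect no serious obstacle beyond invoking the right foundational facts; the one point to be careful about is the ``if'' direction, where it is \emph{not} enough to observe that $\as+\bb$ has squared norm $2$ — a priori a norm-$2$ lattice vector need not be a root — which is precisely why the root-string argument is needed in place of a bare length computation. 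A more pedestrian alternative would be a direct case analysis over the explicit roots $\pm k_i\pm k_j$ and $\tfrac12(\pm k_1\pm\cdots\pm k_8)$ listed above, but the uniform argument is cleaner and makes manifest that the statement holds for every simply-laced root system.
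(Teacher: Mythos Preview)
Your proof is correct. The paper, however, does not supply its own proof of this proposition: it simply introduces the statement as ``the following standard result on the roots of $\eo$ (normalized to $2$)'' and cites Carter and Humphreys as references. Your argument is essentially the standard one that those references contain --- Cauchy--Schwarz plus integrality of Cartan integers for the range of $(\as,\bb)$, the norm computation for the ``only if'' directions, and the unbroken root-string lemma for the ``if'' directions --- so there is no meaningful methodological contrast to draw. Your closing remark, that for the $\eo$ lattice every norm-$2$ vector is in fact a root (so the bare length computation would also suffice here, though not for a general simply-laced system), is a fair side comment but not needed for the proof as you have written it.
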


The following properties of the asymmetry function follow from its
definition, \cite{graaf}.

\begin{proposition}\label{epsprop} The asymmetry function $\ep$ satisfies, for $\as , \bb, \gh \in \Lleo$:
	\bes\ba{rrcl}
	i) & \ep (\as + \bb, \gh) & =& \ep (\as , \gh)\ep (\bb , \gh) \\
	ii) &\ep (\as ,\bb+\gh) & = &\ep (\as , \bb)\ep (\as , \gh) \\
	iii) &\ep (\as , \as) & = & (-1)^{\frac12 (\as,\as)} \impl \ep (\as , \as) = -1 \text{ if } \as\in\Phi_8\\
	iv) &\ep (\as , \bb) \ep (\bb , \as) &=& (-1)^{(\as , \bb)} \impl \ep (\as , \bb) = -\ep (\bb , \as)\text{ if } \as,\bb,\as+\bb\in\Phi_8\\
	v) &\ep (0 , \bb) &=& \ep (\as , 0) = 1 \\
	vi) &\ep (-\as , \bb) &=& \ep (\as , \bb)^{-1} = \ep (\as , \bb) \\
	vii) &\ep (\as , -\bb) &=& \ep (\as , \bb)^{-1} = \ep (\as , \bb) \\
	\ea\ees
\end{proposition}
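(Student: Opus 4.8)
The plan is to prove each identity (i)--(vii) directly from Definition~\ref{asym}, exploiting the bilinear (in the exponents) structure of $\ep(\as,\bb)=\prod_{i,j}\ep(\as_i,\as_j)^{\ell_i m_j}$. First I would establish (i) and (ii): if $\as=\sum\ell_i\as_i$, $\as'=\sum\ell'_i\as_i$ and $\bb=\sum m_j\as_j$, then $\as+\bb$ has coefficients $\ell_i+\ell'_i$, so $\ep(\as+\as',\bb)=\prod_{i,j}\ep(\as_i,\as_j)^{(\ell_i+\ell'_i)m_j}=\prod_{i,j}\ep(\as_i,\as_j)^{\ell_i m_j}\cdot\prod_{i,j}\ep(\as_i,\as_j)^{\ell'_i m_j}=\ep(\as,\bb)\ep(\as',\bb)$, using that the exponents add and the values lie in the multiplicative group $\{\pm1\}$. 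Identity (ii) is the same computation in the second slot. Identity (v) is immediate: a zero vector has all coefficients $0$, so the product is empty, i.e.\ equals $1$.

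Next, (vi) and (vii): since $\ep$ takes values in $\{\pm1\}$, every value is its own inverse, so $\ep(\as,\bb)^{-1}=\ep(\as,\bb)$; combining this with (i) (resp.\ (ii)) applied to $\as+(-\as)=0$ and (v) gives $\ep(\as,\bb)\ep(-\as,\bb)=\ep(0,\bb)=1$, whence $\ep(-\as,\bb)=\ep(\as,\bb)$, and symmetrically for (vii). So (i),(ii),(v),(vi),(vii) are all formal consequences of bilinearity-in-exponents plus the value group being $\zz/2$; these are routine.

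The substantive parts are (iii) and (iv), which encode the antisymmetry/sign data built into the seed values $\ep(\as_i,\as_j)$. For (iv), by bilinearity it suffices to prove $\ep(\as_i,\as_j)\ep(\as_j,\as_i)=(-1)^{(\as_i,\as_j)}$ on simple roots and then extend: indeed $\ep(\as,\bb)\ep(\bb,\as)=\prod_{i,j}\big(\ep(\as_i,\as_j)\ep(\as_j,\as_i)\big)^{\ell_i m_j}=\prod_{i,j}(-1)^{(\as_i,\as_j)\ell_i m_j}=(-1)^{\sum_{i,j}\ell_i m_j(\as_i,\as_j)}=(-1)^{(\as,\bb)}$ by bilinearity of the form. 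The simple-root case is a short case check: if $i=j$, both factors are $-1$ and $(\as_i,\as_i)=2$, giving $(+1)=(-1)^2$; if $i\ne j$ and $\as_i+\as_j$ is not a root, both factors are $+1$ and $(\as_i,\as_j)=0$ (Proposition~\ref{sproots}), consistent; if $i\ne j$ and $\as_i+\as_j$ is a root, exactly one of $\as_i<\as_j$, $\as_j<\as_i$ holds, so exactly one factor is $-1$, giving the product $-1=(-1)^{(\as_i,\as_j)}$ since $(\as_i,\as_j)=-1$ for adjacent simple roots. For (iii), write $\ep(\as,\as)=\prod_{i,j}\ep(\as_i,\as_j)^{\ell_i\ell_j}$ and split into the diagonal part $\prod_i\ep(\as_i,\as_i)^{\ell_i^2}=\prod_i(-1)^{\ell_i^2}=(-1)^{\sum_i\ell_i}$ and the off-diagonal part, which pairs $(i,j)$ with $(j,i)$ to give $\prod_{i<j}\big(\ep(\as_i,\as_j)\ep(\as_j,\as_i)\big)^{\ell_i\ell_j}=\prod_{i<j}(-1)^{(\as_i,\as_j)\ell_i\ell_j}$ by the simple-root case of (iv) just proved; together $\ep(\as,\as)=(-1)^{\sum_i\ell_i+\sum_{i<j}(\as_i,\as_j)\ell_i\ell_j}=(-1)^{\frac12(\as,\as)}$, since $(\as,\as)=\sum_i\ell_i^2(\as_i,\as_i)+2\sum_{i<j}\ell_i\ell_j(\as_i,\as_j)=2\sum_i\ell_i^2+2\sum_{i<j}\ell_i\ell_j(\as_i,\as_j)$ and $\ell_i^2\equiv\ell_i\ \mathrm{mod}\ 2$. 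The final implication in (iii) then follows because $\as\in\Phi_8$ is normalized to $(\as,\as)=2$, so $\ep(\as,\as)=(-1)^1=-1$; and in (iv), if $\as,\bb,\as+\bb\in\Phi_8$ then by Proposition~\ref{sproots} $(\as,\bb)=-1$, so $\ep(\as,\bb)\ep(\bb,\as)=-1$, i.e.\ $\ep(\as,\bb)=-\ep(\bb,\as)$.

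The main obstacle is essentially bookkeeping: making the ``extend from simple roots by bilinearity'' step fully rigorous requires being careful that the identity $(-1)^{ab}$ is bilinear in $(a,b)$ only modulo $2$, and that $\ell_i^2\equiv\ell_i$ and $(\as_i,\as_i)=2$ conspire correctly in the exponent of (iii); once the two reductions to simple roots are isolated, the remaining case analysis is finite and short, relying only on Proposition~\ref{sproots} for the sign of $(\as_i,\as_j)$ when $\as_i+\as_j$ is or is not a root.
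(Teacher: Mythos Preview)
Your proof is correct. The paper itself does not supply a proof of this proposition: it simply states that the properties ``follow from its definition'' and cites de Graaf \cite{graaf}. Your argument is precisely the standard verification one would expect from that citation --- bilinearity in the exponents gives (i), (ii), (v), (vi), (vii) immediately, and the reduction to simple roots via the case analysis for $\ep(\as_i,\as_j)\ep(\as_j,\as_i)=(-1)^{(\as_i,\as_j)}$, combined with $\ell_i^2\equiv\ell_i\pmod 2$ and $(\as_i,\as_i)=2$, yields (iii) and (iv). There is nothing to compare against in the paper beyond the bare assertion, and your write-up fills that gap cleanly.
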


Property $iv)$ shows that the product in \eqref{comrel} is indeed
antisymmetric.

\subsection{$\eo$ Charges and the Magic Star\label{eoch}}

There are four orthogonal $\ad$'s in $\eo$, where orthogonal means that the
planes on which their root systems lie are orthogonal to each other.\newline
We denote one of them $\adc$ for color, one $\adf$ for flavor, the other two
$\adgu$ and $\adgd$:\\

\vspace{.3cm}
\noindent $\adc$ : $k_i - k_j$ , $i\ne j\ , \ i,j\in \{1,2,3\}$

\vspace{.3cm}
\noindent $\adf$ : $k_i - k_j$ , $i\ne j\ , \  i,j\in \{4,5,6\}$

\vspace{.3cm}
\noindent $\adgu$ : $\pm (k_7 + k_8)$, $\pm \frac{1}{2} ( k_1 + k_2 + k_3 + k_4  + k_5  + k_6 - k_7 - k_8)$ \\
\phantom{$\adgu$ : }$\pm \frac{1}{2} ( k_1 + k_2 + k_3 + k_4  + k_5  + k_6 + k_7 + k_8)$

\vspace{.3cm}
\noindent $\adgd$ : $\pm (k_7 - k_8)$ , $\pm \frac{1}{2} ( - k_1 + k_2 - k_3 + k_4 - k_5 + k_6 - k_7 + k_8)$\\
\phantom{$\adgd$ : }$\pm \frac{1}{2} ( - k_1 + k_2 - k_3  + k_4 - k_5 + k_6 + k_7 - k_8)$\\

The generators of $\adc$ and $\adf$ are bosonic.\newline

\begin{figure}[hbtp]
	\begin{center}
		\begin{tikzpicture}
		\coordinate (O) at (0,0);
		\coordinate (a1) at (2,0);
		\coordinate (a1m) at (-2,0);
		\coordinate (a2) at (-1,1.8);
		\coordinate (a2m) at (1,-1.8);
		\coordinate (a3) at (1,1.8);
		\coordinate (a3m) at (-1,-1.8);
		
		\coordinate (j11) at (1,.6);
		\coordinate (j1m1) at (-1,.6);
		\coordinate (j11m) at (1,-.6);
		\coordinate (j1m1m) at (-1,-.6);
		\coordinate (j02) at (0,1.2);
		\coordinate (j02m) at (0,-1.2);
		
		\draw[magenta] (O) circle [radius=1mm];
		\fill[magenta] (O) circle [radius=.5mm];
		\fill[magenta] (a1) circle [radius=.5mm];
		\fill[magenta] (a1m) circle [radius=.5mm];
		\fill[magenta] (a2) circle [radius=.5mm];
		\fill[magenta] (a2m) circle [radius=.5mm];
		\fill[magenta] (a3) circle [radius=.5mm];
		\fill[magenta] (a3m) circle [radius=.5mm];
		\fill[blue] (j11) circle [radius=.7mm];
		\fill[blue] (j11m) circle [radius=.7mm];
		\fill[blue] (j1m1) circle [radius=.7mm];
		\fill[blue] (j1m1m) circle [radius=.7mm];
		\fill[blue] (j02) circle [radius=.7mm];
		\fill[blue] (j02m) circle [radius=.7mm];
		\node at (-.2,.22){$\mathbf {g_o}$};
		\node at (2.5,.0){$\scriptstyle \{2,0\}$};
		\node at (-2.5,0){$\scriptstyle \{\text{-}2,0\}$};
		\node at (1,2.1){$\scriptstyle \{1,3\}$};
		\node at (-1,2.1){$\scriptstyle \{\text{-}1,3\}$};
		\node at (-1,-2.1){$\scriptstyle \{\text{-}1,\text{-}3\}$};
		\node at (1,-2.1){$\scriptstyle \{1,\text{-}3\}$};
		
		\node at (1,.9){$\scriptstyle \{1,1\}$};
		\node at (-1,.9){$\scriptstyle \{\text{-}1,1\}$};
		\node at (-1,-.9){$\scriptstyle \{\text{-}1,\text{-}1\}$};
		\node at (1,-.9){$\scriptstyle \{1,\text{-}1\}$};
		\node at (0,1.5){$\scriptstyle \{0,2\}$};
		\node at (0,-1.5){$\scriptstyle \{0,\text{-}2\}$};
		\draw[dotted] (j1m1) -- (j11);
		\draw[dotted] (j11) -- (j02m);
		\draw[dotted] (j02m) -- (j1m1);
		\draw[dotted] (j1m1m) -- (j11m);
		\draw[dotted] (j11m) -- (j02);
		\draw[dotted] (j02) -- (j1m1m);
		\draw[dotted] (a1) -- (a3);
		\draw[dotted] (a3) -- (a2);
		\draw[dotted] (a2) -- (a1m);
		\draw[dotted] (a1m) -- (a3m);
		\draw[dotted] (a3m) -- (a2m);
		\draw[dotted] (a2m) -- (a1);
		\end{tikzpicture}
	\end{center}
	\caption{The Magic Star (MS): $\ggo=\es$ in the MS of $\eo$, $\ggo=\adgu\oplus\adgd$ in the MS of $\es$; the triangles represent the $\mathbf 3$ and $\mathbf{\overline 3}$ representations of the $\ad$ with roots in the external hexagon.}\label{f:ms}
\end{figure}

The Magic Star (MS) of $\eo$ shown in Fig. \ref{f:ms} is obtained by
projecting its roots on the plane of $\adc$, \cite{pt1}. The pair of
integers $\{r,s\}$ are the (Euclidean) scalar products $r:=(\as,k_{1}-k_{2})$
and $s:=(\as,k_{1}+k_{2}-2k_{3})$, for each root $\as$. The fermions on the
tips of the MS are quarks since they are acted upon by $\adc$: they are
\textit{colored}. The fermions within the center of the MS are leptons: they
are colorless. A similar MS of $\es$ within $\eo$ is obtained by projecting
the roots in the center of the MS of $\eo$ on the plane of $\adf$.\newline

Notice that in each tip of the MS of $\eo$ we get 27 roots, 11 of which are
bosonic and 16 fermionic; this corresponds to the following decomposition of
the irrepr. $\mathbf{27}$ of $\es$ :%
\begin{equation}
\begin{array}{l}
\es\supset \mathbf{d}_{\mathbf{5}}\oplus \mathbb{C}, \\
\mathbf{27}=\mathbf{1}_{-4}\oplus \mathbf{10}_{2}\oplus \mathbf{16}_{-1}.%
\end{array}%
\end{equation}%
On the other hand, within $\es$ we have 9 roots in each tip of the MS, 5 of
which are bosonic and 4 fermionic; this corresponds to the following
decomposition of the repr. $\left( \mathbf{3,3}\right) $ of $\ad\oplus \ad$ :%
\begin{equation}
\begin{array}{l}
\ad\oplus \ad\supset \mathfrak{a}_{1}\oplus \mathfrak{a}_{1}\oplus \mathbb{C}%
_{I}\oplus \mathbb{C}_{II}\supset \mathfrak{a}_{1}\oplus \mathfrak{a}%
_{1}\oplus \mathbb{C}_{I+II}, \\
\left( \mathbf{3,3}\right) =\left( \mathbf{1,1}\right) _{-2,-2}\oplus \left(
\mathbf{2,2}\right) _{1,1}\oplus \left( \mathbf{2,1}\right) _{1,-2}\oplus
\left( \mathbf{1},\mathbf{2}\right) _{-2,1}\\
\php{\left( \mathbf{3,3}\right)} = \left( \mathbf{1,1}\right)
_{-4}\oplus \left( \mathbf{2,2}\right) _{2}\oplus \left( \mathbf{2,1}\right)
_{-1}\oplus \left( \mathbf{1},\mathbf{2}\right) _{-1}.%
\end{array}%
\end{equation}

Tables \ref{t:magiceo}, \ref{t:magices}, \ref{t:leptons}, \ref{t:quarks}
describe the content of these MS, root by root.\newline

The \textit{magic} of the MS is that each tip $\{r,s\}$ of the star, $%
\{r,s\}\in \{\{0,\pm 2\},\{\pm 1,\pm 1\}\}$, both in the case of $\eo$ and
of $\es$, can be viewed as a cubic (simple) \textit{Jordan algebra} $%
J_{\{r,s\}}$, over the octonions and the complex field respectively, and
each pair of opposite tips with respect to the center of the star has a
natural algebraic structure of a \textit{Jordan pair}. The algebra in the
center of the star is the derivation algebra of the Jordan pair; when the
Jordan pair is made of a pair of Jordan algebras, its derivations also
define the Lie algebra of the structure group of the Jordan algebra itself,
\cite{loos, bied1, pt1, T-2}.

\subsection{The Standard Model\label{stdm}}

In this section, we relate the $\eo$ charges to the degrees of freedom of
the Standard Model (SM) of elementary particle physics. It is not our aim to
carry through a detailed analysis, in particular we do not consider symmetry
breaking, nor Higgs mechanism, nor chirality and parity violation by weak
interactions in the fermionic sector. We do however focus on spin as an
internal degree of freedom, and this will be instrumental for the treatment
of the Poincar\'{e} action on our algebra, which we will investigate in the
companion paper, \cite{mym2}.\newline

The first important step, after splitting the roots into colored and
colorless as in the previous section, is to find the electromagnetic $%
\mathbf{u(1)}^{em}$ that gives the right charges to quarks and leptons. We
select the one generated by%
\begin{equation}
h_{\gh}=-\iu(\frac{1}{3}h_{\as_{1}}+\frac{2}{3}h_{\as_{2}}+h_{\as_{3}}),
\end{equation}
giving to $x_{\as}$, where $\as=\sum \lam_{i}k_{i}$, the charge%
\begin{equation}
q_{e.m.}(\as):=(\as,q_{\gamma })=-\frac{1}{3}(\lam_{1}+\lam_{2}+\lam_{3})+%
\lam_{4},
\end{equation}%
where%
\begin{equation}
q_{\gamma }:=-\frac{1}{3}\as_{1}-\frac{2}{3}\as_{2}-\as_{3}=-\frac{1}{3}%
(k_{1}+k_{2}+k_{3})+k_{4}.
\end{equation}%
The second column of Tables \ref{t:magiceo}, \ref{t:magices}, \ref{t:leptons}%
, \ref{t:quarks} shows the charges of the $\eo$ generators $x_{\as}$, with $%
\as$ shown in the first column. In particular, Tables \ref{t:leptons} and %
\ref{t:quarks} show the charges given to quarks and leptons.

We now select the semisimple Lie algebra $\mathbf{d_{2}}\simeq \mathbf{so(4,%
\cc)}\simeq \au\oplus \au$ with roots $\pm k_{5}\pm k_{6}$ (other choices
would be equivalent, thus this does not imply any loss of generality). We
denote by $\rho _{1},\rho _{2}$ the roots $k_{5}-k_{6}$ and $k_{5}+k_{6}$,
respectively, and by $\au^{(1)}$ the subalgebra of $\mathbf{d_{2}}$
associated to the roots $\pm \rho _{1}$ and by $\au^{(2)}$ the one
associated to the roots $\pm \rho _{2}$. The non-Cartan generators of $\eo$
fall into $\au\oplus \au$ irreducible representations of spin $(s_{1},s_{2})$%
:%
\begin{equation}
\begin{array}{ll}
\left( 0,0\right) : & x_{\alpha },\alpha =\pm k_{i}\pm k_{j},\ i,j\notin
\left\{ 5,6\right\} ; \\
\left( 1/2,0\right) : & x_{\alpha },\alpha =\frac{1}{2}\left( k\pm \left(
k_{5}-k_{6}\right) \right) , \ k:=\pm k_{1}\pm k_{2}\pm k_{3}\pm k_{4}\pm
k_{7}\pm k_{8}; \\
\left( 0,1/2\right) : & x_{\alpha },\alpha =\frac{1}{2}\left( k\pm \left(
k_{5}+k_{6}\right) \right) ; \\
\left( 1/2,1/2\right) : & x_{\alpha },\alpha =\pm k_{i}\pm k_{5}~\text{or~}%
\alpha =\pm k_{i}\pm k_{6}, \ \text{for~a~fixed~}i\notin \left\{ 5,6\right\}
; \\
\left( 1,0\right) : & x_{\alpha },\alpha =\pm \left( k_{5}-k_{6}\right) , \
\text{the~adjoint~of~$\au^{(1)}$ (adding its Cartan);} \\
\left( 0,1\right) : & x_{\alpha },\alpha =\pm \left( k_{5}+k_{6}\right) , \
\text{the adjoint of $\au^{(2)}$ (adding its Cartan)}\newline
,%
\end{array}%
\end{equation}%
where $(1,0)\oplus (0,1)$ corresponds to the 6 components as a rank 2
antisymmetric tensor in 4 dimensions, with \textit{selfdual} and
\textit{antiselfdual} parts $(1,0)$ and $(0,1)$, respectively. Notice that all
fermions have half-integer spin $(\frac{1}{2},0)$ or $(0,\frac{1}{2})$,
whereas all bosons of type $x_{\as}$ have integer spin.

In order to define the action of the Poincar\'{e} group in \cite{mym2}, we
need the covering group of rotations in the internal space. For this purpose
we select the spin (diagonal) subalgebra $\spin\in \au^{(1)}\oplus \au^{(2)}$ as the
compact (real) form with generators
\bea{l}\label{rpm}
R^{+}+R^{-},\quad \iu \left(R^{+}-R^{-}\right), \quad \iu H_{R}, \quad \text{where}\\ R^{+}:=x_{\rho _{1}}+x_{\rho _{2}},\quad R^{-}:=x_{-\rho _{1}}+x_{-\rho _{2}},\quad H_{R}:=\frac{1}{2}(h_{\rho
_{1}}+h_{\rho _{2}})
\eea
The $(\frac{1}{2},\frac{1}{2})$ representation splits
into a scalar and a vector under this rotation subalgebra. The spin-1
particle within $(\frac{1}{2},\frac{1}{2})$ is the linear span of the
generators $x_{k_{i}\pm k_{5}}$ with z-component of spin $s_{z}:=\frac{1}{2}%
(\rho _{1}+\rho _{2},k_{i}\pm k_{5})=(k_{5},k_{i}\pm k_{5})=\pm 1$ and $%
\frac{1}{2}(\eps(\rho _{1},k_{i}-k_{5})x_{k_{i}-k_{6}}+\eps(\rho
_{2},k_{i}-k_{5})x_{k_{i}+k_{6}})$ with $s_{z}=0$; the corresponding scalar
is $\frac{1}{2}(\eps(\rho _{1},k_{i}-k_{5})x_{k_{i}-k_{6}}-\eps(\rho
_{2},k_{i}-k_{5})x_{k_{i}+k_{6}})$, as it can be easily verified.

\subsubsection{$W^\pm$}

Let us now consider the $W^{\pm }$ bosons. There are not many choices for
them: indeed, they must be colorless vectors with respect to $\spin$ and
have electric charge $\pm 1$. The $W^{\pm }$ bosons are therefore the
generators associated to $\pm (k_{4}-k_{5})$ (within $\adf$ mentioned in
section \sref{eoch}), $\pm (k_{4}+k_{5})$, the electric charge $\pm 1$ given
by the presence of $k_{4}$; they change flavor to both quarks and leptons.
The above analysis suggests that the extra degree of freedom needed, say for
$W^{+}$, to become massive, from the 2 degree of freedom of the massless
helicity-1 state, is $\eps(\rho _{1},k_{4}+k_{6})x_{k_{4}+k_{6}}+\eps(\rho
_{2},k_{4}-k_{6})x_{k_{4}-k_{6}}$, as a part of the Higgs mechanism - that
we will not discuss any further in this paper.

\begin{remark}\label{remdg} We could have made other equivalent choices for the $\mathbf{d_2}\simeq \mathbf{so(4,\cc)} \simeq \au\oplus\au$ subalgebra, which has to act non-trivially on $W^\pm$: once passing to real forms, it cannot possibly commute with the weak interaction $\sud^L$. For what concerns the no-go theorem by Distler Garibaldi, the hypothesis TOE1 of \cite{dg} cannot possibly apply, as outlined in the introduction, see remark \sref{rdg}. We also emphasize that, contrary to \cite{dg}, we are dealing with the complex form of $\eo$ because we want complex phases for the particle states.
\end{remark}

Using the properties of the asymmetry function and the ordering of the
simple roots $\as_{i}<\as_{i+1}$ we get:%
\begin{equation}
\eps(\rho _{1},k_{4}-k_{5})=\eps(\rho _{2},k_{4}-k_{5})=1,
\end{equation}
hence the massive $W^{\pm }$ is described by three components:%
\begin{equation}
\begin{array}{ll}
W_{1}^{\pm }:=x_{\pm k_{4}+k_{5}} & (s_{z}=1); \\
W_{0}^{\pm }:=\frac{1}{2}(x_{\pm k_{4}-k_{6}}+x_{\pm k_{4}+k_{6}}) & \
(s_{z}=0); \\
W_{-1}^{\pm }:=x_{\pm k_{4}-k_{5}} & (s_{z}=-1).%
\end{array}%
\end{equation}

Moreover, using the notation%
\begin{equation}
\begin{array}{l}
R_{x}:=\frac{1}{2}(x_{\rho _{1}}+x_{\rho _{2}}+x_{-\rho _{1}}+x_{-\rho
_{2}}); \\
R_{y}:=\frac{\iu}{2}(x_{\rho _{1}}+x_{\rho _{2}}-x_{-\rho _{1}}-x_{-\rho
_{2}}); \\
R_{z}:=\frac{\iu}{2}(h_{\rho _{1}}+h_{\rho _{2}}),%
\end{array}%
\end{equation}
we obtain%
\begin{equation}
\begin{array}{lllll}
\left[ R_{x},W_{-1}^{\pm }\right] =W_{0}^{\pm }, &  & \left[
R_{x},W_{1}^{\pm }\right] =W_{0}^{\pm }, &  & \left[ R_{x},W_{0}^{\pm }%
\right] =-\frac{1}{2}\left( W_{1}^{\pm }+W_{-1}^{\pm }\right) , \vspace{4pt}\\
\left[ R_{y},W_{-1}^{\pm }\right] =\iu W_{0}^{\pm }, &  & \left[
R_{y},W_{1}^{\pm }\right] =-\iu W_{0}^{\pm }, &  & \left[ R_{y},W_{0}^{\pm }%
\right] =-\frac{\iu}{2}\left( W_{1}^{\pm }-W_{-1}^{\pm }\right) ,%
\end{array}
\label{rw}
\end{equation}%
and%
\begin{equation}
\left[ R_{z},W_{s_{z}}^{\pm }\right] =\iu s_{z}W_{s_{z}}^{\pm },~s_{z}\in
\left\{ 1,0,-1\right\} .
\end{equation}%
These commutation relations correspond to the action of the rotation
matrices:%
\begin{equation}
\begin{array}{ccccc}
J_{x}:=\frac{1}{\sqrt{2}}\left(
\begin{array}{ccc}
0 & -1 & 0 \\
1 & 0 & 1 \\
0 & -1 & 0%
\end{array}%
\right) , &  & J_{y}:=\frac{\iu}{\sqrt{2}}\left(
\begin{array}{ccc}
0 & -1 & 0 \\
-1 & 0 & 1 \\
0 & 1 & 0%
\end{array}%
\right) , &  & J_{z}:=\iu\left(
\begin{array}{ccc}
1 & 0 & 0 \\
0 & 0 & 0 \\
0 & 0 & -1%
\end{array}%
\right)%
\end{array}%
\end{equation}%
on the vectors $v=v_{+}e_{+}+v_{z}e_{z}+v_{-}e_{-}\in \mathbb{R}^{3}$ in the
spherical basis $\left\{ e_{+},e_{z},e_{-}\right\} $ corresponding, for
angular momentum 1, to the spherical harmonic basis for the irreducible
representations of $SO(3)$. With respect to the same vector $%
v=v_{x}e_{x}+v_{y}e_{y}+v_{z}e_{z}\in \mathbb{R}^{3}$ in the standard
orthogonal basis $\left\{ e_{x},e_{y},e_{z}\right\} $, we have:%
\begin{equation}
\begin{array}{lll}
e_{+}=\frac{1}{\sqrt{2}}\left( e_{x}+\iu e_{y}\right) , &  & e_{-}=\frac{1}{%
\sqrt{2}}\left( e_{x}-\iu e_{y}\right) , \\
v_{+}=\frac{1}{\sqrt{2}}\left( v_{x}-\iu v_{y}\right) , &  & v_{-}=\frac{1}{%
\sqrt{2}}\left( v_{x}+\iu v_{y}\right) .%
\end{array}%
\end{equation}%
The transformation between (column) vectors in the two bases is represented
by the unitary matrix $U$:%
\begin{equation}
\left(
\begin{array}{c}
v_{+} \\
v_{z} \\
v_{-}%
\end{array}%
\right) =U\left(
\begin{array}{c}
v_{x} \\
v_{y} \\
v_{z}%
\end{array}%
\right) ,~~~U:=\left(
\begin{array}{ccc}
\frac{1}{\sqrt{2}} & 0 & -\frac{\iu}{\sqrt{2}} \\
0 & 1 & 0 \\
\frac{1}{\sqrt{2}} & 0 & \frac{\iu}{\sqrt{2}}%
\end{array}%
\right) .
\end{equation}%
The correspondence with $R$'s and $W$'s is:%
\begin{equation}
\begin{array}{ccccc}
R_{x}\leftrightarrow J_{x}, &  & R_{y}\leftrightarrow J_{y}, &  &
R_{z}\leftrightarrow J_{z}, \vspace{4pt}\\
W_{1}^{\pm }\leftrightarrow v_{+}, &  & W_{-1}^{\pm }\leftrightarrow v_{-},
&  & W_{0}^{\pm }\leftrightarrow \frac{1}{\sqrt{2}}v_{z}.%
\end{array}%
\end{equation}

\begin{remark}
We have an interesting relationship between weak and rotation generators in the internal space (spin generators) by noticing that
\bea{cl}
\left[W^+_1+W^+_{-1},\php{\iu} W^-_0\right]=\left[W^-_1+W^-_{-1},\php{\iu}W^+_0\right]&=R_x\vspace{4pt} \\
\left[W^+_1-W^+_{-1},\iu W^-_0\right]=\left[W^-_1-W^-_{-1},\iu W^+_0\right]&=R_y
\eea
and consequently, the relation with $R_z=[R_x,R_y]$.
\end{remark}

\subsubsection{$Z^0$}

We associate the $Z^{0}$ boson with spin $s_{z}=0$, denoted by $Z_{0}^{0}$,
to the vector orthogonal to $q_{\gamma }$ in the plane of $(k_{4}-k_{5})$
and $q_{\gamma }$, hence it is, up to a scalar, the Cartan generator $%
Z_{0}^{0}=\frac{1}{4}(h_{\as_{1}}+2h_{\as_{2}}+3h_{\as_{3}}+4h_{\as_{4}})$;
it interacts with left-handed neutrinos and right-handed antineutrinos,
contrary to the photon; it does not allow for \textit{flavor changing
neutral currents}.

Notice that the generator of \textit{hypercharge} $\mathbf{u(1)}^{Y}$ of the
standard model is in this setting compact Cartan generator $\iu h_{Y}$ where
$h_{Y}:=-\frac{1}{6}(2h_{\as_{1}}+4h_{\as_{2}}+6h_{\as_{3}}+3h_{\as_{4}})$.
The Weinberg angle $\theta _{W}$ is the angle between the axis representing
the photon and that representing the hypercharge, therefore $\theta _{W}=\pi
/2-\phi $ where $\phi $ is the angle between $q_{\gamma }$ and $k_{4}-k_{5}$%
; we get $sin^{2}\theta _{W}=3/8$.

Since $\as_{1}+2\as_{2}+3\as_{3}+4\as_{4}=k_{1}+k_{2}+k_{3}+k_{4}-4k_{5}$
hence $\frac{1}{4}(\as_{1}+2\as_{2}+3\as_{3}+4\as_{4},x_{\pm \rho
_{1,2}})=-(k_{5},\pm \rho _{1})=-(k_{5},\pm \rho _{2})=\mp 1$, we have the
following commutation relations:%
\begin{equation}
\begin{array}{ccc}
\lbrack x_{\rho _{1}}+x_{\rho _{2}}+x_{-\rho _{1}}+x_{-\rho _{2}},Z_{0}^{0}]
& = & (x_{\rho _{1}}+x_{\rho _{2}}-x_{-\rho _{1}}-x_{-\rho _{2}})\newline
; \vspace{2pt}\\
\left[ x_{\rho _{1}}+x_{\rho _{2}}-x_{-\rho _{1}}-x_{-\rho _{2}},Z_{0}^{0}%
\right] & = & (x_{\rho _{1}}+x_{\rho _{2}}+x_{-\rho _{1}}+x_{-\rho _{2}}),%
\end{array}%
\end{equation}
that is%
\begin{equation}
\begin{array}{ccccc}
\left[ R_{x},Z_{0}^{0}\right] =-\iu R_{y}, &  & \left[ R_{y},Z_{0}^{0}\right]
=\iu R_{x}, &  & \left[ R_{z},Z_{0}^{0}\right] =0.%
\end{array}%
\end{equation}

We want $Z^{0}$, as a spin-1 particle, to obey the same commutation
relations with the rotation generators as $W^{\pm }$. We can so define the
spin $\pm 1$ components of $Z^{0}$ by comparison with the last commutator in
each row of \eqref{rw}:%
\begin{equation}
\begin{array}{lll}
Z_{1}^{0}= & -\left[ R_{x},Z_{0}^{0}\right] +\iu\left[ R_{y},Z_{0}^{0}\right]
= & -R_{x}+\iu R_{y}=-R_{+}; \vspace{4pt}\\
Z_{-1}^{0}= & -\left[ R_{x},Z_{0}^{0}\right] -\iu\left[ R_{y},Z_{0}^{0}%
\right] = & R_{x}+\iu R_{y}=R_{-}.%
\end{array}%
\end{equation}
(see \eqref{rpm} for the definition of $R_\pm$).

By looking at Table \ref{t:leptons} we notice that $Z^0_1$ interacts, for
instance, with $\nu^\prime_e$ with spin $-\frac12$ to give $\nu^\prime_e$
with spin $\frac12$, and similarly for other leptons and for quarks. In
particular there are no flavor changing neutral currents.

\subsubsection{The Tables at a Glance}

From the Tables \ref{t:magiceo}, \ref{t:magices}, \ref{t:leptons}, \ref%
{t:quarks} one can deduce all the standard model charges (in particular we
have denoted with a \textit{prime} possible mixings in Tables \ref{t:leptons}%
, \ref{t:quarks}). We have:

\benar{SM.}
\item the color charges are denoted by the pair $\{r_{c},s_{c}\}$ and
one can associate \textit{colors} to them, say $blue=\{1,1\}$, $%
green=\{-1,1\}$ and $red=\{0,-2\}$, and similarly for the anti-colors;

\item the quarks are the fermions in Table \ref{t:magiceo} with a
certain color; they come in 3 color families and anti-quarks have
anti-colors and opposite electric charges $-\frac{2}{3},\frac{1}{3}$ with
respect to quarks;

\item the gluons are the generators of $\adc$, change color to the
quarks on which they act as on a $\mathbf{3}$ or $\mathbf{\bar{3}}$
representation and their electric charge is $0$;

\item the leptons are in the center of the MS in Table \ref{t:magiceo}
and are the fermions in Table \ref{t:magices};

\item the leptons have integer electric charge in $\{-1,0,1\}$;

\item there are 4 flavor families; we have used the notation $\chi
,\nu _{\chi }$ for the fourth lepton family and $T,B$ for the fourth quark
family;

\item the fourth column of Tables \ref{t:leptons} and \ref{t:quarks}
shows the component of spin along the axis specified by the spin generator $%
R_{z}:=\frac{\iu}{2}(h_{\rho _{1}}+h_{\rho _{2}})$. Obviously a rotation by $%
2\pi $ of quarks and leptons changes their sign, whether it leaves vector
bosons invariant.
\een

The consequences of this classification with respect to the Poincar\'{e}
action on $\ggu$ will be discussed in the companion paper \cite{mym2}.

\section{The Kac-Moody Algebras $\eno,\ed,\edo,\dedo$\label{KM}}

Let $\ga$ denote the Kac-Moody algebra associated to the $n\times n$ Cartan
matrix $A$, with Cartan subalgebra $\hh$. For all algebras in this paper, $A$
is symmetric; its entries are denoted by $a_{ij}$. We denote the \textit{%
Chevalley generators} by $E_{i}$, associated to the simple root $\as_{i}$,
and by $F_{i}$, associated to the root $-\as_{i}$. Let $\np$ (resp. $\nm$)
denote the subalgebra of $\ga$ generated by $\{E_{1},...,E_{n}\}$ (resp.$%
\{F_{1},...,F_{n}\}$). By Theorem 1.2 a), e) in \cite{kac}, the following
\textit{triangular decomposition} holds :%
\begin{equation}
\ga=\np\oplus \hh\oplus \nm\qquad \text{(direct sum of vector spaces)}.
\label{tridec}
\end{equation}
Note that for a root $\as>0$ (resp. $\as<0$) we have $\as\in \hhs$, the dual
of $\hh$, and the vector space $\gas=\{x\in \ga\ |\ [h,x]=\as(h)x\ \forall
h\in \hh\}$ is the linear span of the elements of the form $%
[...[[E_{i_{1}},E_{i_{2}}],E_{i_{3}}]...E_{i_{s}}]$ (resp. $%
[...[[F_{i_{1}},F_{i_{2}}],F_{i_{3}}]...F_{i_{s}}]$), such that $\as%
_{i_{1}}+...+\as_{i_{s}}=\as$ (resp. $=-\as$). The multiplicity $m_{\as}$ of
a root $\as$ is defined as $m_{\as}:=\text{dim}\ \gas$ ($m_{\as_{i}}=m_{-\as%
_{i}}=1$ for each simple root $\as_{i}$).

Kac-Moody algebras, \cite{kac} \cite{moo}, can be tackled in terms of simple
roots and their (extended) Dynkin diagram, or equivalently their Cartan
matrix, without any reference to root coordinates. Some physical features or
interpretations may however be more explicit when roots are expressed in an
orthonormal basis rather than in a simple root basis. This is the case of
this paper, in which some root coordinates, except for the case of $\eo$,
are interpreted as momentum coordinates. We recall that the metric is
Euclidean for $\eo$ but Lorentzian in the case of $\eno,\ed,\edo,\dedo$.

Our notation for the simple roots is shown in the Dynkin diagram of $\edo$:
\be
\setlength{\unitlength}{3pt}
\begin{picture}(85,15)
\linethickness{0.3mm}
\multiput(2,5)(8,0){11}{\circle{2}}
\put(66,13){\circle{2}}
\put(65.95,6){\line(0,1){6}}
\multiput(3,5)(8,0){10}{\line(1,0){6}}
\put(0,0){$\rmu$}
\put(8,0){$\ropp$}
\put(17,0){$\rop$}
\put(25,0){$\as_0$}
\put(33,0){$\as_1$}
\put(41,0){$\as_2$}
\put(49,0){$\as_3$}
\put(57,0){$\as_4$}
\put(65,0){$\as_5$}
\put(73,0){$\as_7$}
\put(81,0){$\as_8$}
\put(68,12){$\as_6$}
\end{picture}
\label{dynkedo}
\ee
Analogous diagrams are those of $\ed$ (without $\rmu, \ropp$), of $\eno$
(without $\rmu, \ropp, \rop$) and of $\eo$ (without $\rmu, \ropp, \rop, \as%
_0 $).

\subsection{The Simple Roots of $\eno$ and $\ed$}

We introduce the following set of simple roots $\rmu,\as_{0},\as_{1},...,\as%
_{8}$ of $\ed$ in terms of the basis vectors $\km,k_{0},k_{1},...,k_{8}$
spanning the Lorentzian space $\rr^{9,1}$, with $(\km,\km)=-1$ and $%
(k_{i},k_{i})=1$, for $0\leq i\leq 8$:%
\bea{lcl}
\alpha _{-1} &=&\frac{1}{2}\left( k_{-1}-3k_{0}\right) ;   \\
\alpha _{0} &=&\frac{1}{2}\left( k_{-1}+k_{0}\right) -k_{1}+k_{8};   \\
\alpha _{1} &=&k_{1}-k_{2}
;   \\
\alpha _{2} &=&k_{2}-k_{3}
;   \\
\alpha _{3} &=&k_{3}-k_{4}
;  \label{deltaeno} \\
\alpha _{4} &=&k_{4}-k_{5}
;   \\
\alpha _{5} &=&k_{5}-k_{6}
;   \\
\alpha _{6} &=&k_{6}-k_{7}
;   \\
\alpha _{7} &=&k_{6}+k_{7}
;   \\
\alpha _{8} &=&-\frac{1}{2}\left(
k_{1}+k_{2}+k_{3}+k_{4}+k_{5}+k_{6}+k_{7}+k_{8}\right) .
\eea
All these roots have norm 2, with respect to the scalar product $(\cdot
,\cdot )$, and the corresponding Cartan matrix is the Gram matrix of the $\ed$
even unimodular Lorentzian lattice $\iinu$ made of all the vectors in $\rr%
^{9,1}$ whose components are all in $\zz$ or all in $\zz+\um$ and have
integer scalar product with $\um(\km%
+k_{0}+k_{1}+k_{2}+k_{3}+k_{4}+k_{5}+k_{6}+k_{7}+k_{8})$, as it can be
easily checked.

The \textit{affine} Kac-Moody algebra $\eno$ is obtained by eliminating the
root $\rmu$. Notice that%
\begin{equation}
\as_{0}=\as_{9}+\delta \text{ , }\left\{
\begin{array}{l}
\delta :=\frac{1}{2}(\km+k_{0}),\quad \text{(hence }(\delta ,\delta )=0\text{%
)}; \\
\as_{9}:=-(2\as_{1}+3\as_{2}+4\as_{3}+5\as_{4}+6\as_{5}+3\as_{6}+4\as_{7}+2%
\as_{8}),%
\end{array}%
\right.  \label{r9d}
\end{equation}
where $\as_{9}=-k_{1}+k_{8}$ is the lowest height root of $\eo$ and $\delta $
is a lightlike vector.

\subsection{$\eno$}

The Cartan subalgebra of $\eno$ is the span of 10 generators, two new ones
with respect to $\eo$. We write
\begin{equation}
\hh=\text{span}\{K,d,h_{i}\ |\ i=1,...,8\},
\end{equation}
where
\bea{lcl}
K &:=&h_{\delta },~~\delta :=\frac{1}{2}\left( k_{-1}+k_{0}\right) ;
\\
d &:=&h_{\rho },~~\rho :=-k_{-1}+k_{0},  \label{Kd}
\eea
and $K$ is a central element.

Let $\Phi _{8}$ be the root system of $\eo$, $\hh$ its Cartan subalgebra, $%
h_{\as}$ (resp. $e_{\as}$) a Cartan (resp. non-Cartan) generator associated
to the root $\as$, $h_{i}$ for $1\leq i\leq 8$ a Cartan generator associated
to the simple root $\as_{i}$, $X$, $Y$ either Cartan or non-Cartan
generators of $\eo$ . It is shown by Kac, \cite{kac}, that:

\begin{enumerate}
\item The root system $\Phi _{9}$ of $\eno$ is%
\begin{equation}
\Phi _{9}:=\{\as+m\delta \ |\ \as\in \Phi _{8}\ ,\ m\in \zz\}\cup \{m\delta
\ |\ m\in \zz\backslash 0\}.
\end{equation}

\item $\eo$ is determined by the following commutation relations:%
\begin{equation}
\begin{array}{llll}
\left[ h,h^{\prime }\right] & = & 0 & \text{if~}h,h^{\prime }\in \mathfrak{h}%
; \\
\left[ h,E_{\alpha }\right] & = & \left( h^{\ast },\alpha \right) E_{\alpha }
& \text{if~}h\in \mathfrak{h},~\alpha \in \Phi _{8}; \\
\left[ E_{\alpha },E_{-\alpha }\right] & = & -h_{\alpha } & \text{if~}\alpha
\in \Phi _{8}; \\
\left[ E_{\alpha },E_{\beta }\right] & = & 0 & \text{if~}\alpha ,\beta \in
\Phi _{8},~\alpha +\beta \notin \Phi _{8}\cup \left\{ 0\right\} ; \\
\left[ E_{\alpha },E_{\beta }\right] & = & \varepsilon \left( \alpha ,\beta
\right) E_{\alpha +\beta } & \text{if~}\alpha ,\beta ,\alpha +\beta \in \Phi
_{8},%
\end{array}
\label{crc}
\end{equation}
where $\varepsilon $ is Kac's asymmetry function, see section \ref{seo}.

\item The commutation relations in $\eno$ are the same as those for the
central extended loop algebra of $\eo$ plus derivations:\newline
\bea{l}
[t^m\otimes X \oplus \lam K \oplus\mu\ d, t^n\otimes Y \oplus \lam_1 K \oplus\nu \ d] =\\ (t^{m+n}\otimes [X,Y] -m\nu t^m\otimes X + n\mu t^n \otimes Y)\oplus m \delta_{m,-n}(X | Y)\ K
\eea
with the following correspondence%
\begin{equation}
\begin{array}{lll}
t^{m}\otimes E_{\alpha } & \leftrightarrow & x_{\alpha +m\delta }; \\
t^{m}\otimes H_{\alpha } & \leftrightarrow & x_{m\delta }^{\alpha }~\text{if~%
}m\neq 0; \\
t^{0}\otimes H_{\alpha } & \leftrightarrow & h_{\alpha }; \\
K,~d & \leftrightarrow & h_{\delta },~h_{\rho },~~\text{(see~\ref{Kd})}%
\end{array}
\label{cla}
\end{equation}
and with the invariant non-degenerate symmetric bilinear form $\left( \cdot
|\cdot \right) $ defined by:%
\begin{equation}
\left( X|Y\right) :=\left\{
\begin{array}{lll}
\left( \alpha ,\beta \right) & \text{if~}X=H_{\alpha }, & Y=H_{\beta }; \\
0 & \text{if~}X=H_{\alpha }, & Y=E_{\beta }; \\
-\delta _{\alpha ,-\beta } & \text{if~}X=E_{\alpha }, & Y=E_{\beta }.%
\end{array}%
\right.
\end{equation}
\end{enumerate}

For $\as,\bb\in \Phi _{8}$ (roots of $\eo$) and the letter $h$ referring to
a Cartan generator, the commutation relations, with no reference to the loop
algebra, are:

\bea{rcll}\label{creop}
\left[ h, h^\prime\right] &=& 0 \\
\left[ h_\as, x^\bb_{m\delta}\right] &=& 0 & m\ne 0\\
\left[ h_\delta, x^\as_{m\delta}\right] &=& 0 & m\ne 0\\
\left[ h_\rho, x^\as_{m\delta}\right] &=& m x^\as_{m\delta} & m\ne 0\\
\left[ h_\bb, x_{\as+m\delta}\right] &=& (\bb,\as) x_{\as+m\delta} \\
\left[ h_\delta, x_{\as+m\delta}\right] &=& 0 \\
\left[ h_\rho, x_{\as+m\delta}\right] &=& m x_{\as+m\delta} \\
\left[ x^\as_{m\delta}, x^\bb_{n\delta}\right] &=& m \delta_{m,-n} (\as,\bb) h_\delta& m,n\ne 0\\
\left[ x^\as_{m\delta}, x_{\bb+n\delta}\right] &=& (\as,\bb) x_{\bb+(m+n)\delta} & m\ne 0
\eea
\be\label{creo}
\left[ x_{\as+m\delta}, x_{\bb+n\delta}\right] = \left\{
\begin{array}{ll}
0 & \text{if } \as+\bb\not \in \Phi_8\cup \{0\}\\
\ep(\as,\bb)x_{\as+\bb+(m+n)\delta} &\text{if } \as+\bb \in \Phi_8\\
-x^\as_{(m+n)\delta} & \text{if } \as+\bb=0 \text{ and } m+n\ne 0\\
-h_{\as+m\delta} & \text{if } \as+\bb=0 \text{ and } m+n = 0
\end{array} \right.
\ee
\begin{remark} The commutation relations of $\eno$ are essentially determined by those of $\eo$, whose main ingredient for explicit calculations is the asymmetry function.
\end{remark}
\begin{remark}
The second correspondence in \eqref{cla} shows why the so called {\it imaginary roots} $m\delta$ are 8-fold degenerate: the space of generators associated to each root $m\delta$ is indeed an 8-dimensional space isomorphic to the span of $\{h_i,\ i=1,...,8\}$, namely the Cartan subalgebra of $\eo$.
\end{remark}

\subsection{$\eno$ in a $1+1$ dimensional toy model}

The explicit construction presented here, using a realization of the roots
in terms of the orthonormal basis $\{k_{i}\ ,\ i=-1,0,1,...,8\}$ of $\rr%
^{9,1}$, suggests to let the coordinates $k_{1},...,k_{8}$ relate to
charge/spin degrees of freedom, and to interpret the coordinates $\km,k_{0}$
as 2-momentum coordinates with Lorentzian signature.

A crucial step in our model, describing a Universe that expands from an
initial quantum state, is to restrict the particles forming that state,
hence their interactions, to lie in the subalgebra $\np$ of the triangular
decomposition \eqref{tridec} of $\eno$ (the reason will be explained in item %
\ref{posen} below). The restriction to $\np$ has the following consequences:

\benar{TM.}
\item\label{item1}The only commutation relations within \eqref{creop}
and \eqref{creo} occurring in $\np$ are, for $\as,\bb\in \Phi _{8}$: %
\bea{lll}\label{creno} \left[ x^\as_{m\delta}, x^\bb_{n\delta}\right] &= 0
&m,n\ne 0\\
\left[ x^\as_{m\delta}, x_{\bb+n\delta}\right] &= (\as,\bb) x_{\bb+(m+n)\delta} &m> 0, \ n\ge 0\\
\left[ x_{\as+m\delta}, x_{\bb+n\delta}\right] &= \left\{ \ba{l} 0\\
\ep(\as,\bb)x_{\as+\bb+(m+n)\delta}\\
-x^\as_{(m+n)\delta} \ea \right. & \ba{l} \text{if } \as+\bb\not \in
\Phi_8\cup \{0\}\\
\text{if } \as+\bb \in \Phi_8\\
\text{if } \as+\bb=0 \ea
\eea
We remark that $\as+\bb=0$ in the last commutation relation implies that
either $\as$ or $\bb$ is negative, hence $m+n\ne 0$, being $m,n\ge 0$ and in
particular $m>0$ in $x_{\as+m\delta}$ whenever $\as$ is a negative root of $%
\eo$.

\item The roots involved in the interactions are not only real; we
have for instance that $\as_{0},-\as_{9}$ are positive roots and $(\as_{0},-%
\as_{9})=-2$, therefore, by Proposition 5.1 of \cite{kac}, at least $\as%
_{0}+(-\as_{9})=\delta $ and $\as_{0}+2(-\as_{9})$ are roots; the outgoing
generator $x_{\delta }^{\as_{9}}$ of the interaction between $x_{\as_{0}}=x_{%
\as_{9}+\delta }$ and $x_{-\as_{9}}$ is similar to the Cartan generator $h_{%
\as_{9}}$, except that it carries a momentum $\delta =\um(\km+k_{0})$. This
yields the interesting consideration that \textit{neutral radiation fields
like the photon are not associated to Cartan elements of infinite
dimensional Kac-Moody algebras but to imaginary roots}, a feature that is
not present in Yang-Mills theories. It is also worth noticing that the
second equation in \eqref{creno} implies that the neutral radiation field
keeps memory of the particle-antiparticle pair which produced it,
represented in the equation by the root $\as$.

\item\label{posen} The fact that all particles are in $\np$ ensures
that their energy is always positive, even though they may be related to
both positive and negative roots of $\eo$, as revealed by the fact that $\as%
_{9}$ is the negative root of lowest height. In other words we get both
particles and antiparticles, and all of them do have positive energy.

\item The momenta given to each particle by the interactions are
lightlike. Energy momentum is conserved because the outgoing particle in an
elementary interaction is associated to a root which is the sum of the roots
of the incoming particles. \textit{All particles are massless}, since
momenta add up in the unique spatial direction.

\item We give fermionic particles helicity $1/2$.\\

So far everything runs smooth and seems physically plausible, but:\\

\item The initial quantum state of the 2-dimensional toy model under
consideration is to be a superposition of states with momenta in opposite
space directions and opposite helicity.

\item \label{td} Since $\sdel:=\frac{1}{2}(\km-k_{0})$ is not a root
of $\eno,$ we need to introduce the \textit{auxiliary roots}%
\begin{equation}
\as+m\sdel
\end{equation}%
yielding the needed superposition of momenta.\newline
We will use the notation:%
\begin{eqnarray}
\tilde{p} &:=&m\tilde{\delta}~~\text{for~}p=m\delta ;  \notag \\
\tilde{x}_{\alpha +p} &:=&x_{\alpha +p}+\eta _{\alpha }x_{\alpha +\tilde{p}%
},~~\text{where~}\eta _{\alpha }^{4}=1; \\
\tilde{x}_{p}^{\alpha } &:=&x_{p}^{\alpha }+\eta _{\alpha }x_{\tilde{p}%
}^{\alpha },  \notag
\end{eqnarray}
so that $\{x_{\as+p}\rightarrow \eta _{\as}x_{\as+\pt},\ x_{p}^{\as%
}\rightarrow x_{\pt}^{\as}\}$ is an isomorphism $\eno\rightarrow \tilde{\eno}
$ and $x_{\as}\rightarrow \eta _{\as}x_{\as}$ is an $\eo$ involution, up to
a sign. The coefficient $\eta _{\as}\in \{\pm 1,\pm \iu\}$ has been
introduced to have the freedom to vary it depending on the spin of the
generator related to $\as$.

The commutation relations \eqref{creno} become, for $\as,\bb\in \Phi_8$ and $%
p_1,p_2$ linear combination with positive integer coefficients of $\delta,%
\sdel$:
\bea{lll}\label{crenot}
\left[ x^\as_{p_1}, x^\bb_{p_2}\right] &= 0 &p_1,p_2\ne 0\\
\left[ x^\as_{p_1}, x_{\bb+p_2}\right] &= (\as,\bb) x_{\bb+p_1+p_2} &p_1\ne 0\\
\left[ x_{\as+p_1}, x_{\bb+p_2}\right] &= \left\{
\ba{l}
0\\
\ep(\as,\bb)x_{\as+\bb+ p_1+p_2}\\
-x^\as_{p_1+p_2}
\ea \right.
&
\ba{l}
\text{if } \as+\bb\not \in \Phi_8\cup \{0\}\\
\text{if } \as+\bb \in \Phi_8\\
\text{if } \as+\bb=0
\ea
\eea
$\as+\bb=0$ in the last commutation relation implies $p_1+p_2
\ne 0$, as remarked in item \ref{item1}. Moreover $(p_1+p_2)^2\ge 0$ and we
still have positive energy associated to all particles. It is no longer
true, however, that particles are necessarily massless, as we immediately
realize by the fact that $\delta+\sdel=k_{-1}$ represents a mass at rest. We
also notice that the product is to be antisymmetric, therefore
\be
x^{-\as}_{p} = - x^{\as}_{p} \text{ and } \left[ x^\as_{p_1}, x_{\bb+p_2}\right] = -\left[ x_{\bb+p_2}, x^\as_{p_1} \right]
\ee
Moreover, for consistency
\be\label{apb}
x^{\as+\bb}_{p}=x^{\as}_{p} + x^{\bb}_{p}
\ee
We will prove in the forthcoming paper \cite{mym2}, as a particular case of
a more general statement, that the algebra so defined is a Lie algebra.

\item \label{sd} We lack two spatial dimensions. This suggests a
further extension to $\edo$ or $\dedo$, or to analogous Borcherds (or
generalized Kac-Moody) algebras, as we will investigate in the next sections.

\item Our toy model still lacks three features, which urges to a
further extension of the algebra (investigated in the companion paper \cite%
{mym2}):
\ben

\item \textit{locality}, i.e. spacetime related multiplication rules that
immerse the algebra into a vertex-type algebra;

\item \textit{space expansion} within the vertex algebra;

\item \textit{Pauli exclusion principle}, that, as we will see, requires an
extension to Lie superalgebra.
\een
\een

The above considerations are hinting to the fact that the extension of $\eo$
to Kac-Moody, or even beyond, to Generalized Kac-Moody (Borcherds), algebras
is very appealing to particle physics, not only to 2-dimensional conformal
field theory, \cite{CFT}.

\subsection{$\edo$ and $\dedo$}

The Dynkin diagram of $\edo$ is shown in \eqref{dynkedo}. We use the same
indices for the simple roots $\rmu,\ropp,\rop,\as_0,\as_1,...,\as_8$ and the
orthonormal basis vectors $\km,\kopp,\kop,k_0,k_1,...,k_8$ of the Lorentzian
space $\rr^{11,1}$, with $(k_i,k_i)=1$, for $i = 0^{\prime \prime
},0^{\prime },0,1,...,8$, and $(\km,\km)=-1$. \newline

It is here worth presenting a different choice for the set of simple roots,
with respect to \eqref{deltaeno}, in which the $\eno$ simple roots are all
fermionic:
\bea{lcl}
\alpha _{-1} &=&-\frac{\sqrt{2}}{2}k_{0^{\prime \prime }}+\frac{\sqrt{6}}{2}%
k_{0^{\prime }}+\sqrt{3}\delta ;\vspace{4pt}   \\
\alpha _{0^{\prime \prime }} &=&\sqrt{2}k_{0^{\prime \prime }}+\delta ;\vspace{4pt}
 \\
\alpha _{0^{\prime }} &=&\sqrt{2}k_{0^{\prime }}-2k_{0}+2\delta ;\vspace{4pt}   \\
\alpha _{0} &=&\frac{1}{2}
(-k_{1}+k_{2}+k_{3}-k_{4}-k_{5}-k_{6}-k_{7}-k_{8})+\delta
;\vspace{4pt}   \\
\alpha _{1} &=&\frac{1}{2}(k_{1}-k_{2}-k_{3}-k_{4}+k_{5}+k_{6}+k_{7}-k_{8})
;\vspace{4pt}  \label{rsr12} \\
\alpha _{2} &=&\frac{1}{2}(k_{1}-k_{2}+k_{3}+k_{4}-k_{5}-k_{6}-k_{7}+k_{8})
;\vspace{4pt}   \\
\alpha _{3} &=&\frac{1}{2}(-k_{1}+k_{2}-k_{3}+k_{4}-k_{5}+k_{6}+k_{7}-k_{8})
;\vspace{4pt}   \\
\alpha _{4} &=&\frac{1}{2}(k_{1}+k_{2}-k_{3}-k_{4}+k_{5}-k_{6}-k_{7}+k_{8})
;\vspace{4pt}   \\
\alpha _{5} &=&\frac{1}{2}(-k_{1}-k_{2}+k_{3}+k_{4}+k_{5}-k_{6}+k_{7}-k_{8});\vspace{4pt}
 \\
\alpha _{6} &=&\frac{1}{2}(k_{1}+k_{2}+k_{3}-k_{4}-k_{5}+k_{6}+k_{7}+k_{8})
;\vspace{4pt}   \\
\alpha _{7} &=&\frac{1}{2}(-k_{1}-k_{2}-k_{3}-k_{4}-k_{5}+k_{6}-k_{7}+k_{8});\vspace{4pt}
 \\
\alpha _{8} &=&\frac{1}{2}(k_{1}+k_{2}+k_{3}+k_{4}+k_{5}+k_{6}-k_{7}-k_{8}),
\eea
where $\delta =\frac{1}{2}(k_{0}+\km)$.\newline
The corresponding Cartan matrix is the Gram matrix of the $\edo$ lattice in $%
\rr^{11,1}$ which is not unimodular. We interpret the coordinates $(\kopp,%
\kop,k_{0},\km)$ as 4-momentum coordinates with Lorentzian signature.

\medskip

Similar arguments hold for the Kac-Moody algebra $\dedo$, whose Dynkin
diagram is:
\be
\setlength{\unitlength}{3pt}
\begin{picture}(85,15)
\linethickness{0.3mm}
\multiput(2,5)(8,0){10}{\circle{2}}
\put(58,13){\circle{2}}
\put(10,13){\circle{2}}
\put(9.9,6){\line(0,1){6}}
\put(57.9,6){\line(0,1){6}}
\multiput(3,5)(8,0){9}{\line(1,0){6}}
\put(0,0){$\ropp$}
\put(8,0){$\rop$}
\put(17,0){$\as_0$}
\put(25,0){$\as_1$}
\put(33,0){$\as_2$}
\put(41,0){$\as_3$}
\put(49,0){$\as_4$}
\put(57,0){$\as_5$}
\put(65,0){$\as_7$}
\put(73,0){$\as_8$}
\put(60,12){$\as_6$}
\put(12,12){$\rmu$}
\end{picture}
\label{dynkdedo}
\ee

This is the extension of $\eo$ through the orthogonal Lie algebra $\mathbf{d}%
_4$.

A possible set of simple roots, in the orthonormal basis of the Lorentzian
space $\rr^{11,1}$, is:%
\bea{lcl}
\alpha _{-1} &=&\sqrt{2}k_{0^{\prime }}+3\delta ;  \vspace{4pt} \\
\alpha _{0^{\prime \prime }} &=&\sqrt{2}k_{0^{\prime \prime }}+\delta ;
\vspace{4pt} \\
\alpha _{0^{\prime }} &=&\sqrt{2}k_{0^{\prime }}-2k_{0}+2\delta ;  \vspace{4pt} \\
\alpha _{0} &=&\frac{1}{2}%
(-k_{1}+k_{2}+k_{3}-k_{4}-k_{5}-k_{6}-k_{7}-k_{8})+\delta;  \vspace{4pt} \\
\alpha _{1} &=&\frac{1}{2}(k_{1}-k_{2}-k_{3}-k_{4}+k_{5}+k_{6}+k_{7}-k_{8})
;  \label{rsr12p} \\
\alpha _{2} &=&\frac{1}{2}(k_{1}-k_{2}+k_{3}+k_{4}-k_{5}-k_{6}-k_{7}+k_{8})
;  \vspace{4pt} \\
\alpha _{3} &=&\frac{1}{2}(-k_{1}+k_{2}-k_{3}+k_{4}-k_{5}+k_{6}+k_{7}-k_{8})
;  \vspace{4pt} \\
\alpha _{4} &=&\frac{1}{2}(k_{1}+k_{2}-k_{3}-k_{4}+k_{5}-k_{6}-k_{7}+k_{8})
;  \vspace{4pt} \\
\alpha _{5} &=&\frac{1}{2}(-k_{1}-k_{2}+k_{3}+k_{4}+k_{5}-k_{6}+k_{7}-k_{8})
;  \vspace{4pt} \\
\alpha _{6} &=&\frac{1}{2}(k_{1}+k_{2}+k_{3}-k_{4}-k_{5}+k_{6}+k_{7}+k_{8})
;  \vspace{4pt} \\
\alpha _{7} &=&\frac{1}{2}(-k_{1}-k_{2}-k_{3}-k_{4}-k_{5}+k_{6}-k_{7}+k_{8});
\vspace{4pt} \\
\alpha _{8} &=&\frac{1}{2}(k_{1}+k_{2}+k_{3}+k_{4}+k_{5}+k_{6}-k_{7}-k_{8}),
\eea
where $\delta :=\frac{1}{2}(k_{0}+\km)$. One can realize at a glance that
this is the same set of simple roots of $\edo$ except for the root $\rmu$.
We think that the possibility to discriminate between $\edo$ and $\dedo$ on
physical grounds can only arising when performing explicit computer
calculations, which for the case of $\dedo$ should undergo major
simplifications, due to the presence of only one irrational number, $\sqrt{2}
$.

\section{Beyond Kac-Moody\label{diff}}

Several difficulties in proceeding with our program arise with Kac-Moody
algebras:

\benar{P.}
\item \label{prob1} As already mentioned above, the presence of
irrational numbers in the definition of momentum variables can be very
annoying in computer calculations, and it is also quite unnatural in an
algebra based on integer numbers, both in the roots and in the structure
constants.

\item \label{prob2} The need of roots, as in item \ref{td} for $\eno$,
with opposite helicity and opposite signs of the 3-momentum components $\kopp%
,\kop,k_{0}$ complicates the algebra very much, since \textit{they cannot in
pairs be roots of the Kac-Moody algebras}. In the case of $\eno$, this
problem can be overcome by enlarging the explicit commutation relations to
include consistently the new roots, but in the case of $\edo$ or $\dedo$ one
needs to further modify the Serre relations, which does not seem an easy
task to us.

\item \label{prob3} However, the most important issue comes from
physics: in $\edo$ and $\dedo$ three simple roots (namely, $\rmu$, $\ropp$
and $\rop$) have \textit{tachyon-like} momenta, due to their positive norm.
The interpretation of such tachyonic momenta, as well as the investigation
of their impact on the interactions among charged particles, is beyond the
scope of the present paper; computer calculations, starting from an initial
state, may reveal the scenario that tachyonic simple roots may yield to.
\een

It is our opinion that these are good motivations for focusing our
investigation on Generalized Kac-Moody (Borcherds) algebras, where two of
the three difficulties listed above disappear.

\section{Borcherds $\bdo$}

Borcherds algebras are a generalization of Kac-Moody algebras obtained by
releasing the condition on the diagonal elements of the Cartan matrix, which
are then allowed to be non-positive, as well as by restricting the Serre
relations to the generators associated to positive norm simple roots, \cite%
{borc1}, \cite{borc2}.\newline
A generalized Kac-Moody (or Borcherds) algebra $\borc$ is constructed as
follows.

Let $H$ be a real vector space with a symmetric bilinear inner product $%
(\cdot ,\cdot )$, and with elements $h_{i}$ indexed by a countable set $%
\ical
$, such that $(h_{i},h_{j})\leq 0$ if $i\neq j$ and $%
2(h_{i},h_{j})/(h_{i},h_{i})$ is an integer if $(h_{i},h_{i})$ is positive.
The matrix $A$ with entries $a_{ij}:=(h_{i},h_{j})$ is called the
symmetrized Cartan matrix of $\borc$.\newline
The generalized Kac-Moody (or Borcherds) algebra $\borc$ associated to $A$
is defined to be the Lie algebra generated by $H$ and elements $e_{i}$ and $%
f_{i}$, for $i\in \ical$, with the following relations:

\begin{enumerate}
\item The (injective) image of $H$ in $\borc$ is commutative.

\item If $h$ is in $H$, then $[h,e_{i}]=(h,h_{i})e_{i}$ and $%
[h,f_{i}]=-(h,h_{i})f_{i}$.

\item $[e_{i},f_{j}]=\delta _{ij}h_{i}$.

\item If $a_{ii}>0$ and $i\neq j$, then ad$(e_{i})^{n}\ e_{j}=\text{ad}%
(f_{i})^{n}\ f_{j}=0$, where $n=1-2a_{ij}/a_{ii}$.

\item If $a_{ij}=0$, then $[e_{i},e_{j}]=[f_{i},f_{j}]=0$.
\end{enumerate}

If $a_{ii}>0$ for all $i\in \ical$, then $\borc$ is the Kac-Moody algebra
with Cartan matrix $A$. In general, $\borc$ has almost all the properties of
a Kac-Moody algebra, the only major difference being that $\borc$ is allowed
to have \textit{imaginary simple roots}.

The root lattice $\Ll$ is the free Abelian group generated by elements $\as%
_{i}$ for $i\in \ical$, called \textit{simple roots}, and $\Ll$ has a
real-valued bilinear form defined by $(\as_{i},\as_{j})=a_{ij}$ . The Lie
algebra $\borc$ is then graded by $\Ll$ with $H$ in degree $0$, $e_{i}$
(resp $f_{i}$) in degree $\as_{i}$ (resp. $-\as_{i}$). A root is a nonzero
element $\as$ of $\Ll$ such that there are elements of $\borc$ of degree $%
\as
$. A root $r$ is called \textit{real} if $(r,r)>0$, otherwise it is called
\textit{imaginary}. A root $r$ is positive if it is a sum of simple roots,
and negative if $-r$ is positive. Notice that \textit{every root is either
positive or negative}, \cite{borc1}.

We build the following symmetrized Cartan matrix for a Borcherds algebra of
rank 12, that we denote by $\bdo$:%
\begin{equation}
\left(
\begin{array}{cccccccccccc}
-1 & -1 & -1 & -1 & 0 & 0 & 0 & 0 & 0 & 0 & 0 & 0 \\
-1 & 0 & -1 & -1 & 0 & 0 & 0 & 0 & 0 & 0 & 0 & 0 \\
-1 & -1 & 0 & -1 & 0 & 0 & 0 & 0 & 0 & 0 & 0 & 0 \\
-1 & -1 & -1 & 2 & -1 & 0 & 0 & 0 & 0 & 0 & 0 & 0 \\
0 & 0 & 0 & -1 & 2 & -1 & 0 & 0 & 0 & 0 & 0 & 0 \\
0 & 0 & 0 & 0 & -1 & 2 & -1 & 0 & 0 & 0 & 0 & 0 \\
0 & 0 & 0 & 0 & 0 & -1 & 2 & -1 & 0 & 0 & 0 & 0 \\
0 & 0 & 0 & 0 & 0 & 0 & -1 & 2 & -1 & 0 & 0 & 0 \\
0 & 0 & 0 & 0 & 0 & 0 & 0 & -1 & 2 & -1 & -1 & 0 \\
0 & 0 & 0 & 0 & 0 & 0 & 0 & 0 & -1 & 2 & 0 & 0 \\
0 & 0 & 0 & 0 & 0 & 0 & 0 & 0 & -1 & 0 & 2 & -1 \\
0 & 0 & 0 & 0 & 0 & 0 & 0 & 0 & 0 & 0 & -1 & 2%
\end{array}%
\right) .  \label{cm12}
\end{equation}

Notice that, for $\delta :=\as_{0}+2\as_{1}+3\as_{2}+4\as_{3}+5\as_{4}+6\as%
_{5}+3\as_{6}+4\as_{7}+2\as_{8}$, a 4-momentum vector can be written as%
\begin{equation}
p:=p_{0}\rmu+p_{1}(\ropp-\rmu)+p_{2}(\rop-\rmu)+p_{3}(\delta -\rmu).
\label{fmom}
\end{equation}
Using the Cartan Matrix \eqref{cm12} we get indeed:%
\begin{equation}
\begin{array}{l}
(\rmu,\rmu)=-1\newline
; \\
(\ropp-\rmu,\ropp-\rmu)=(\rop-\rmu,\rop-\rmu)=(\delta -\rmu,\delta -\rmu)=1%
\newline
; \\
(\rmu,\ropp-\rmu)=(\rmu,\rop-\rmu)=(\rmu,\delta -\rmu)=0\newline
; \\
(\ropp-\rmu,\rop-\rmu)=(\ropp-\rmu,\delta -\rmu)=(\rop-\rmu,\delta -\rmu)=0,%
\end{array}%
\end{equation}
hence the Lorentzian scalar product:%
\begin{equation}
(p,p^{\prime })=-p_{0}p_{0}^{\prime }+p_{1}p_{1}^{\prime
}+p_{2}p_{2}^{\prime }+p_{3}p_{3}^{\prime }.
\end{equation}

Let us restrict to positive roots $r=\sum_{\ical}\lam_{i}\as_{i}$, $\ical%
=\{-1,0^{\prime \prime },0^{\prime },0,...,8\}$, with $\lam_{i}\in \nn\cup
\{0\}$, and let us denote by $\Bp$ the corresponding subalgebra of $\bdo$.
The physical motivation for restricting to $\Bp$ is that, given a positive
root $r=\sum_{\ical}\lam_{i}\as_{i}$, its 4-momentum is%
\begin{equation}
p=(p_{0},p_{1},p_{2},p_{3})=(\lam_{-1}+\lam_{0\prime \prime }+\lam_{0\prime
}+\lam_{0},\lam_{0\prime \prime },\lam_{0\prime },\lam_{0}),  \label{palfa}
\end{equation}
with $\lam_{-1},\lam_{0^{\prime \prime }},\lam_{0^{\prime }},\lam_{0}\geq 0$%
, implying $m^{2}:=-p^{2}\geq 0$, namely $p$ either lightlike or timelike.
In particular:
\bea{ll}
p^2 &= -(\lam_{-1}^2+2\lam_{-1}\sum_{i\ne -1}{\lam_i}+\sum_{i\ne j,\, i,j\ne-1}{\lam_i\lam_j})\ , \ i,j\in\{-1, 0'',0',0\}\vspace{1em}\\
& \left\{
\ba{ll}
=0 &\text{ if $\lam_{-1}=0$ and at most one $\lambda_i\ne 0\, , i\ne-1$}\\
=-1 &\text{ if $\lam_{-1}=1$ and all $\lambda_i= 0\, , i\ne-1$}\\
\le -2 &\text{ otherwise}
\ea
\right.
\eea
\begin{remark}
	Notice that the mass of a particle cannot be arbitrary small, since there is a lower limit $m\ge 1$.
\end{remark}
For $r = \sum_\ical \lam_i \as_i$, $\ical = \{ -1, 0^{\prime \prime
},0^{\prime },0,...,8\}$, with $\lam_i\in \nn\cup\{0\}$ we introduce the
notation
\bea{ll}
r=\as+p &p:=\lam_{-1}\as_{-1}+\lam_{0''}\as_{0''}+\lam_{0'}\as_{0'}+\lam_0\delta\text{ (see \eqref{fmom} and \eqref{palfa})}\\ &\as:=\lam_0\as_9+\lam_1\as_1+...+\lam_8\as_8\\
&\php{\as}=(\lam_1-2\lam_0)\as_1+(\lam_2-3\lam_0)\as_2+(\lam_3-4\lam_0)\as_3\\
&\php{\as}+(\lam_4-5\lam_0)\as_4+(\lam_5-6\lam_0)\as_5+(\lam_6-3\lam_0)\as_6\\
&\php{\as}+(\lam_7-4\lam_0)\as_7+(\lam_8-2\lam_0)\as_8
\eea
Thus, $\as$ is in the lattice $\Lleo$ of $\eo$, and a precise physical
meaning is assigned to positive real and imaginary roots when $\as\in \Lleo %
\backslash\{0\}$:

\begin{proposition}\label{pmass}
	A generator in $\Bp$, associated to a positive root $r=\as+p$, with $\as\in\Lleo \backslash\{0\}$ and momentum $p\ne 0$, is \textit{massive} if and only if $\as+p$ is an imaginary root; it is \textit{massless} if and only if $r$ is real, in which case it is a positive real root of $\eno\subset\bdo$.
\end{proposition}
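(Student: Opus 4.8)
The plan is to reduce the statement to one orthogonality identity together with the Borcherds classification of real roots. Write $r=\as+p$ as in the statement, with charge part $\as=\lam_{0}\as_{9}+\lam_{1}\as_{1}+\dots+\lam_{8}\as_{8}\in\Lleo$ and momentum part $p=\lam_{-1}\rmu+\lam_{0''}\ropp+\lam_{0'}\rop+\lam_{0}\delta$, where $\delta=\as_{0}+2\as_{1}+3\as_{2}+4\as_{3}+5\as_{4}+6\as_{5}+3\as_{6}+4\as_{7}+2\as_{8}$ is the null root of the affine subalgebra $\eno\subset\bdo$ generated by $\as_{0},\dots,\as_{8}$. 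First I would note that the charge lattice $\Lleo$ (the span of $\as_{1},\dots,\as_{8}$) and the momentum space $V$ (the span of $\rmu,\ropp,\rop,\delta$) are mutually orthogonal for the bilinear form: by \eqref{cm12} one has $(\as_{j},\rmu)=(\as_{j},\ropp)=(\as_{j},\rop)=0$ for $j=1,\dots,8$, and $(\as_{j},\delta)=0$ since $\delta$ is an affine null root. Hence $(r,r)=(\as,\as)+(p,p)=(\as,\as)-m^{2}$, where $m^{2}:=-p^{2}=-(p,p)$ is the quantity evaluated in the displayed case analysis just before the statement; in particular $m^{2}\ge 0$, and $m^{2}=0$ iff $\lam_{-1}=0$ and at most one of $\lam_{0''},\lam_{0'},\lam_{0}$ is nonzero. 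Since $\as\in\Lleo\setminus\{0\}$ and the $\eo$ root lattice is even and positive definite, $(\as,\as)$ is an even integer $\ge 2$, equal to $2$ precisely when $\as\in\Phi_{8}$.

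Two of the four implications then follow at once. If $r$ is imaginary, i.e.\ $(r,r)\le 0$, then $m^{2}=(\as,\as)-(r,r)\ge(\as,\as)\ge 2$, so $r$ is massive. For the converse I would invoke the structure theory of generalized Kac--Moody algebras \cite{borc1,borc2}: the Weyl group of $\bdo$ is generated by the reflections in the \emph{real} simple roots, which here are exactly $\as_{0},\dots,\as_{8}$ (the diagonal entries of \eqref{cm12} equal to $2$); hence it is the affine Weyl group of $\eno$, it fixes $\delta$, and, since every real root of a Borcherds algebra is conjugate under the Weyl group to a real simple root, the real roots of $\bdo$ are exactly the $W(\eno)$-orbit of $\{\as_{0},\dots,\as_{8}\}$, which by \cite{kac} is the set of real roots of $\eno$. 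In particular a real root of $\bdo$ lies in $\mathrm{span}_{\zz}\{\as_{0},\dots,\as_{8}\}$, so a positive real root $r$ has $\lam_{-1}=\lam_{0''}=\lam_{0'}=0$; then by \eqref{palfa} its momentum is $(\lam_{0},0,0,\lam_{0})$, whence $m^{2}=0$, and $r$ is a positive real root of $\eno\subset\bdo$ (with $\lam_{0}\ge 1$, since $p\ne 0$). Thus real implies massless.

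It remains to prove ``massless implies real''. By the case analysis recalled above, $m^{2}=0$ forces $\lam_{-1}=0$ and at most one of $\lam_{0''},\lam_{0'},\lam_{0}$ nonzero; since $p\ne 0$, exactly one is nonzero. If $\lam_{0}=n\ge 1$ (so $\lam_{0''}=\lam_{0'}=0$), then $r=\as+n\delta$ and $(r,r)=(\as,\as)\ge 2>0$, so $r$ is a real root of $\bdo$, hence a positive real root of $\eno$ with $\as\in\Phi_{8}$ by the identification above. The remaining cases, $\lam_{0''}=k\ge 1$ (resp.\ $\lam_{0'}=k\ge 1$) with the other three of $\lam_{-1},\lam_{0'},\lam_{0}$ (resp.\ $\lam_{-1},\lam_{0''},\lam_{0}$) equal to zero, must be excluded by hand: there $\lam_{0}=0$, so the charge part is $\as=\sum_{j\ge 1}\lam_{j}\as_{j}$ and $r=k\ropp+\as$ (resp.\ $r=k\rop+\as$) with $\as$ a nonzero non-negative integer combination of $\as_{1},\dots,\as_{8}$. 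Now $(\ropp,\as_{j})=(\rop,\as_{j})=0$ for $j=1,\dots,8$ by \eqref{cm12}, so the fifth defining relation of $\bdo$ gives $[e_{0''},e_{j}]=[e_{0'},e_{j}]=0$; hence $\mathrm{ad}(e_{0''})$ (resp.\ $\mathrm{ad}(e_{0'})$) annihilates the subalgebra generated by $e_{1},\dots,e_{8}$. Therefore the subalgebra generated by $\{e_{0''},e_{1},\dots,e_{8}\}$ equals $\cc\,e_{0''}$ plus the subalgebra generated by $e_{1},\dots,e_{8}$, and its only elements of positive $\ropp$-degree are the multiples of $e_{0''}$ (which have charge part $0$); but a degree count places $\mathfrak{g}_{r}$ inside that subalgebra, so $\mathfrak{g}_{r}=0$ and $r$ is not a root, a contradiction. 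Assembling the four implications gives ``massive $\Leftrightarrow$ imaginary'' and ``massless $\Leftrightarrow$ real (and a positive real root of $\eno$)''.

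The step I expect to be the genuine obstacle is ``massive implies imaginary'' (equivalently ``real implies massless''): the identity $(r,r)=(\as,\as)-m^{2}$ does not by itself forbid a norm-$2$ root with $(\as,\as)\ge 4$ and $m^{2}\ge 2$, so closing this gap genuinely requires the Borcherds classification of real roots together with the fact that the Weyl group of $\bdo$ fixes $\delta$, not merely a numerical estimate. A secondary point requiring care is the bookkeeping that excludes the spurious ``massless'' lattice vectors $k\ropp+\as$, $k\rop+\as$ via the commutativity relations $[e_{0''},e_{j}]=[e_{0'},e_{j}]=0$.
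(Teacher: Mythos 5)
Your proof is correct and follows essentially the same route as the paper's: both rest on the orthogonal splitting $(r,r)=(\as,\as)+p^{2}$ together with Borcherds' result that every positive root is Weyl-conjugate either to a real simple root (so that the Weyl group is that of $\eno$ and the real roots of $\bdo$ are exactly the real roots of $\eno$, with lightlike momentum) or to an element of the Weyl chamber of non-positive norm (so that imaginary roots satisfy $m^{2}\ge(\as,\as)\ge 2$). The only difference is organizational: you split the claim into four separate implications and add a direct verification that $k\ropp+\as$ and $k\rop+\as$ are not roots via $[e_{0''},e_{j}]=[e_{0'},e_{j}]=0$, which is correct but already implied by the classification of real roots you invoke.
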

\begin{proof}
	The proof consists of the following steps:
	\ben
	\item from Proposition 2.1. of \cite{borc1}, it holds that every positive root $r=\alpha+p$ is conjugate under the Weyl group to a root $r_0=\alpha^\prime+p^\prime$, such that either $r_0$ is a simple real root $\as_i$, $i\in \{0,1,...,8\}$, or it is a positive root in the \textit{Weyl chamber} (namely $(r_0,\as_i)\le 0$ for all simple roots $\as_i$);
	\item since $r$ and $r_0$ are conjugate under the Weyl group, then $(r,r)=(r_0,r_0)$;
	\item if $r_0$ is a real simple root, then it is a root of $\eno$ and ${p^\prime}^2=0$; $r_0$ is real and so is $r$. Since the Weyl group is generated by the reflections $\rho-(\rho,\as_i)\as_i$, where the $\as_i$ simple roots are real, hence $\as_i\in\eno\subset\bdo$, it coincides with the Weyl group of $\eno$. By applying to $r_0$ Weyl reflections we stay within $\eno$, since every Kac-Moody algebra is invariant under the Weyl group; therefore, $r$ is a real root of $\eno$, namely $r=\alpha+m\delta$, $\alpha\in \eo$ and $m\delta$ is lightlike;
	\item if $r_0$ is in the Weyl chamber, then $(r,r)=(r_0,r_0)=\sum \lam_i (r_0,\as_i)\le 0$, $i\in\ical$, since all $\lam_i$ are positive being $r_0$ is a positive root; thus, $r$ is imaginary;
	\item \label{mass} since $(r,r)=\as^2 +p^2\le 0$ with $\as^2\ge 2$, then $m^2\ge \as^2\ge 2$, and the particle associated to $r$ is massive.
	\een
\end{proof}

\begin{remark}
	In the massive case, the lower limit of the mass grows with the norm of $\as$: if $\as\in \Lleo\backslash\{0\}$ is not a root of $\Phi_8$, then the mass is certainly bigger than the lower mass a particle corresponding to a root of $\Phi_8$ may have.
	We also notice that charged massless particles ($\as\ne 0$ in the root $\as+p$) are quite peculiar, since their momentum can only be in 1 direction. The photon is not in this class, since it has $\as=0$, but the (non-virtual) gluons are. A non-virtual photon can be produced in a decay process, \emph{\cite{mym2}}.
\end{remark}
\begin{remark}
	We emphasize that two of the three problems listed in section \sref{diff} about Kac-Moody algebras vanish in the Borcherds algebra $\bdo$. These are obviously \emph{\ref{prob1}} and \emph{\ref{prob3}}. But \emph{\ref{prob2}} still remains, \emph{\cite{mym2}}.
\end{remark}

The companion paper \cite{mym2}, based on the treatment and considerations
of this paper, will focus on a particular rank-12 algebra $\ggu$, in order
to build a model for quantum gravity. In particular, we will turn $\ggu$
into a Lie superalgebra, and we will discuss scattering processes and decays.

\begin{table}[tbp]
\begin{equation*}
{\scriptsize {\
\begin{array}{|cl|c|c|c|}
\hline
\hspace{2.9cm} roots &  & q_{e.m.} & \# & \{r_c,s_c\} \\ \hline
\pm (k_1 - k_2) &  & 0 & 2 & \pm \{2,0\} \\
\pm (k_2 - k_3) &  & 0 & 2 & \pm \{-1,3\} \\
\pm (k_1 - k_3) &  & 0 & 2 & \pm \{1,3\} \\ \hline
&  &  &  &  \\
\pm k_i \pm k_j & 5 \le i < j \le 8 & 0 & 24 &  \\
k_4 \pm k_i & 5 \le i \le 8 & 1 & 8 &  \\
-k_4 \pm k_i & 5 \le i \le 7 & -1 & 8 &  \\
&  &  &  & \{0,0\} \\
\frac{1}{2} (\pm (k_1 + k_2 + k_3 + k_4) \pm ... \pm k_8) & \text{even \# of
+} & 0 & 16 &  \\
\frac{1}{2} (- (k_1 + k_2 + k_3) + k_4 \pm ... \pm k_8) & \text{even \# of +}
& 1 & 8 &  \\
\frac{1}{2} (+ (k_1 + k_2 + k_3) - k_4 \pm ... \pm k_8) & \text{even \# of +}
& -1 & 8 &  \\
&  &  &  &  \\ \hline
&  &  &  &  \\
-k_2 - k_3 \ , \ k_1 + k_4 &  & {\scriptstyle 2/3} & 2 &  \\
k_1 - k_4 &  & -{\scriptstyle 4/3} & 1 &  \\
k_1 \pm k_i & i=5,... , 8 & -{\scriptstyle 1/3} & 8 & \{1,1\} \\
\frac{1}{2} (k_1 - k_2 - k_3 + k_4 \pm ... \pm k_8) & \text{even \# of +} & {%
\scriptstyle 2/3} & 8 &  \\
\frac{1}{2} (k_1 - k_2 - k_3 - k_4 \pm ... \pm k_8) & \text{even \# of +} & -%
{\scriptstyle 1/3} & 8 &  \\
&  &  &  &  \\ \hline
&  &  &  &  \\
+k_2 + k_3 \ , \ -k_1 - k_4 &  & -{\scriptstyle 2/3} & 2 &  \\
-k_1 + k_4 &  & {\scriptstyle 4/3} & 1 &  \\
-k_1 \pm k_i & i=5,... , 8 & {\scriptstyle 1/3} & 8 & \{-1,-1\} \\
\frac{1}{2} (-k_1 + k_2 + k_3 - k_4 \pm ... \pm k_8) & \text{even \# of +} &
-{\scriptstyle 2/3} & 8 &  \\
\frac{1}{2} (-k_1 + k_2 + k_3 + k_4 \pm ... \pm k_8) & \text{even \# of +} &
{\scriptstyle 1/3} & 8 &  \\
&  &  &  &  \\ \hline
&  &  &  &  \\
-k_1 - k_3 \ , \ k_2 + k_4 &  & {\scriptstyle 2/3} & 2 &  \\
k_2 - k_4 &  & -{\scriptstyle 4/3} & 1 &  \\
k_2 \pm k_i & i=5,... , 8 & -{\scriptstyle 1/3} & 8 & \{-1,1\} \\
\frac{1}{2} (-k_1 + k_2 - k_3 + k_4 \pm ... \pm k_8) & \text{even \# of +} &
{\scriptstyle 2/3} & 8 &  \\
\frac{1}{2} (-k_1 + k_2 - k_3 - k_4 \pm ... \pm k_8) & \text{even \# of +} &
-{\scriptstyle 1/3} & 8 &  \\
&  &  &  &  \\ \hline
&  &  &  &  \\
k_1 + k_3 \ , \ -k_2 - k_4 &  & -{\scriptstyle 2/3} & 2 &  \\
-k_2 + k_4 &  & {\scriptstyle 4/3} & 1 &  \\
-k_2 \pm k_i & i=5,... , 8 & {\scriptstyle 1/3} & 8 & \{1,-1\} \\
\frac{1}{2} (k_1 - k_2 + k_3 - k_4 \pm ... \pm k_8) & \text{even \# of +} & -%
{\scriptstyle 2/3} & 8 &  \\
\frac{1}{2} (k_1 - k_2 + k_3 + k_4 \pm ... \pm k_8) & \text{even \# of +} & {%
\scriptstyle 1/3} & 8 &  \\
&  &  &  &  \\ \hline
&  &  &  &  \\
-k_1 - k_2 \ , \ k_3 + k_4 &  & {\scriptstyle 2/3} & 2 &  \\
k_3 - k_4 &  & -{\scriptstyle 4/3} & 1 &  \\
k_3 \pm k_i & i=5,... , 8 & -{\scriptstyle 1/3} & 8 & \{0,-2\} \\
\frac{1}{2} (-k_1 - k_2 + k_3 + k_4 \pm ... \pm k_8) & \text{even \# of +} &
{\scriptstyle 2/3} & 8 &  \\
\frac{1}{2} (-k_1 - k_2 + k_3 - k_4 \pm ... \pm k_8) & \text{even \# of +} &
-{\scriptstyle 1/3} & 8 &  \\
&  &  &  &  \\ \hline
&  &  &  &  \\
k_1 + k_2 \ , \ -k_3 - k_4 &  & -{\scriptstyle 2/3} & 2 &  \\
-k_3 + k_4 &  & {\scriptstyle 4/3} & 1 &  \\
-k_3 \pm k_i & i=5,... , 8 & {\scriptstyle 1/3} & 8 & \{0,2\} \\
\frac{1}{2} (k_1 + k_2 - k_3 - k_4 \pm ... \pm k_8) & \text{even \# of +} & -%
{\scriptstyle 2/3} & 8 &  \\
\frac{1}{2} (k_1 + k_2 - k_3 + k_4 \pm ... \pm k_8) & \text{even \# of +} & {%
\scriptstyle 1/3} & 8 &  \\
&  &  &  &  \\ \hline
\end{array}
}}
\end{equation*}%
\caption{The Magic Star of $\eo$; $q_{e.m.}(\as)=(\as,-%
\frac13(k_1+k_2+k_3)+k_4)$; $\php{Table 3:cc} r_c(\as)=(\as,k_1-k_2)$, $s_c(
\as)=(\as,k_1+k_2-2k_3)$.}
\label{t:magiceo}
\end{table}

\begin{table}[hp]
\begin{equation*}
{\scriptsize {\
\begin{array}{|cl|c|c|c|}
\hline
\hspace{2.9cm} roots &  & q_{e.m.} & \#\ of\ roots & \{r_f,s_f\} \\ \hline
\pm (k_4 - k_5) &  & \pm 1 & 2 & \pm \{2,0\} \\
\pm (k_5 - k_6) &  & 0 & 2 & \pm \{-1,3\} \\
\pm (k_4 - k_6) &  & \pm 1 & 2 & \pm \{1,3\} \\ \hline
&  &  &  &  \\
\pm k_7 \pm k_8 &  & 0 & 4 &  \\
\frac{1}{2} (\pm (k_1 + k_2 + k_3 + k_4+k_5+k_6) \pm k_7 \pm k_8) & \text{%
even \# of +} & 0 & 4 &  \\
&  &  &  & \{0,0\} \\
\frac{1}{2}(- k_1 - k_2 - k_3 + k_4 +k_5+k_6 \pm k_7 \pm k_8) & \text{even
\# of +} & 1 & 2 &  \\
\frac{1}{2}( k_1 + k_2 + k_3 - k_4 -k_5-k_6 \pm k_7 \pm k_8) & \text{even \#
of +} & -1 & 2 &  \\
&  &  &  &  \\ \hline
&  &  &  &  \\
-k_5 - k_6 &  & 0 & 1 &  \\
k_4 \pm k_i & i=7, 8 & 1 & 4 &  \\
&  &  &  & \{1,1\} \\
\frac{1}{2} (k_1 + k_2 + k_3 + k_4 - k_5 - k_6 \pm k_7 \pm k_8) & \text{even
\# of +} & 0 & 2 &  \\
\frac{1}{2} (-k_1 - k_2 - k_3 + k_4 - k_5 - k_6 \pm k_7 \pm k_8) & \text{%
even \# of +} & 1 & 2 &  \\
&  &  &  &  \\ \hline
&  &  &  &  \\
k_5 + k_6 &  & 0 & 1 &  \\
-k_4 \pm k_i & i=7, 8 & -1 & 4 &  \\
&  &  &  & \{-1,-1\} \\
\frac{1}{2} (k_1 + k_2 + k_3 - k_4 + k_5 + k_6 \pm k_7 \pm k_8) & \text{even
\# of +} & -1 & 2 &  \\
\frac{1}{2} (-k_1 - k_2 - k_3 - k_4 + k_5 + k_6 \pm k_7 \pm k_8) & \text{%
even \# of +} & 0 & 2 &  \\
&  &  &  &  \\ \hline
&  &  &  &  \\
-k_4 - k_6 &  & -1 & 1 &  \\
k_5 \pm k_i & i=7, 8 & 0 & 4 &  \\
&  &  &  & \{-1,1\} \\
\frac{1}{2} (k_1 + k_2 + k_3 - k_4 + k_5 - k_6 \pm k_7 \pm k_8) & \text{even
\# of +} & -1 & 2 &  \\
\frac{1}{2} (-k_1 - k_2 - k_3 - k_4 + k_5 - k_6 \pm k_7 \pm k_8) & \text{%
even \# of +} & 0 & 2 &  \\
&  &  &  &  \\ \hline
&  &  &  &  \\
k_4 + k_6 &  & 1 & 1 &  \\
-k_5 \pm k_i & i=7, 8 & 0 & 4 &  \\
&  &  &  & \{1,-1\} \\
\frac{1}{2} (k_1 + k_2 + k_3 + k_4 - k_5 + k_6 \pm k_7 \pm k_8) & \text{even
\# of +} & 0 & 2 &  \\
\frac{1}{2} (-k_1 - k_2 - k_3 + k_4 - k_5 + k_6 \pm k_7 \pm k_8) & \text{%
even \# of +} & 1 & 2 &  \\
&  &  &  &  \\ \hline
&  &  &  &  \\
-k_4 - k_5 &  & -1 & 1 &  \\
k_6 \pm k_i & i=7, 8 & 0 & 4 &  \\
&  &  &  & \{0,-2\} \\
\frac{1}{2} (k_1 + k_2 + k_3 - k_4 - k_5 + k_6 \pm k_7 \pm k_8) & \text{even
\# of +} & -1 & 2 &  \\
\frac{1}{2} (-k_1 - k_2 - k_3 - k_4 - k_5 + k_6 \pm k_7 \pm k_8) & \text{%
even \# of +} & 0 & 2 &  \\
&  &  &  &  \\ \hline
&  &  &  &  \\
k_4 + k_5 &  & 1 & 1 &  \\
-k_6 \pm k_i & i=7, 8 & 0 & 4 &  \\
&  &  &  & \{0,2\} \\
\frac{1}{2} (k_1 + k_2 + k_3 + k_4 + k_5 - k_6 \pm k_7 \pm k_8) & \text{even
\# of +} & 0 & 2 &  \\
\frac{1}{2} (-k_1 - k_2 - k_3 + k_4 + k_5 - k_6 \pm k_7 \pm k_8) & \text{%
even \# of +} & 1 & 2 &  \\
&  &  &  &  \\ \hline
\end{array}
}}
\end{equation*}%
\caption{The Magic Star of $\es$; $q_{e.m.}(\as)=(\as,-%
\frac13(k_1+k_2+k_3)+k_4)$; $\php{Table 3:cc} r_f(\as)=(\as,k_4-k_5)$, $s_f(
\as)=(\as,k_4+k_5-2k_6)$.}
\label{t:magices}
\end{table}

\begin{table}[hp]
\begin{equation*}
{\scriptsize {\
\begin{array}{|c|c|c|c|c|}
\hline
&  &  &  &  \\
roots\ \left[\ k:= k_1+k_2+k_3\ \right] & q_{e.m.} & lepton &
\sigma_z:=\frac12(\as,\rho_1+\rho_2) & \{r_f,s_f\} \\
&  &  &  &  \\ \hline
&  &  &  &  \\
\frac{1}{2} (k + k_4 - k_5 - k_6 + k_7 + k_8) & 0 & \ntm & -1/2 &  \\
\frac{1}{2} (k + k_4 - k_5 - k_6 - k_7 - k_8) & 0 & \ncm & -1/2 &  \\
&  &  &  & \{1,1\} \\
\frac{1}{2} (-k + k_4 - k_5 - k_6 + k_7 - k_8) & 1 & e^+ & -1/2 &  \\
\frac{1}{2} (-k + k_4 - k_5 - k_6 - k_7 + k_8) & 1 & \mu^+ & -1/2 &  \\
&  &  &  &  \\ \hline
&  &  &  &  \\
-\frac{1}{2} (k + k_4 - k_5 - k_6 + k_7 + k_8) & 0 & \ntp & 1/2 &  \\
-\frac{1}{2} (k + k_4 - k_5 - k_6 - k_7 - k_8) & 0 & \ncp & 1/2 &  \\
&  &  &  & \{-1,-1\} \\
-\frac{1}{2} (-k + k_4 - k_5 - k_6 + k_7 - k_8) & -1 & e^- & 1/2 &  \\
-\frac{1}{2} (-k + k_4 - k_5 - k_6 - k_7 + k_8) & -1 & \mu^- & 1/2 &  \\
&  &  &  &  \\ \hline
&  &  &  &  \\
\frac{1}{2} (k - k_4 + k_5 - k_6 + k_7 + k_8) & -1 & \tau^- & 1/2 &  \\
\frac{1}{2} (k - k_4 + k_5 - k_6 - k_7 - k_8) & -1 & \chi^- & 1/2 &  \\
&  &  &  & \{-1,1\} \\
\frac{1}{2} (-k - k_4 + k_5 - k_6 + k_7 - k_8) & 0 & \nep & 1/2 &  \\
\frac{1}{2} (-k - k_4 + k_5 - k_6 - k_7 + k_8) & 0 & \nmp & 1/2 &  \\
&  &  &  &  \\ \hline
&  &  &  &  \\
-\frac{1}{2} (k - k_4 + k_5 - k_6 + k_7 + k_8) & 1 & \tau^+ & -1/2 &  \\
-\frac{1}{2} (k - k_4 + k_5 - k_6 - k_7 - k_8) & 1 & \chi^+ & -1/2 &  \\
&  &  &  & \{1,-1\} \\
-\frac{1}{2} (-k - k_4 + k_5 - k_6 + k_7 - k_8) & 0 & \nem & -1/2 &  \\
-\frac{1}{2} (-k - k_4 + k_5 - k_6 - k_7 + k_8) & 0 & \nmm & -1/2 &  \\
&  &  &  &  \\ \hline
&  &  &  &  \\
\frac{1}{2} (k - k_4 - k_5 + k_6 + k_7 + k_8) & -1 & \tau^- & -1/2 &  \\
\frac{1}{2} (k - k_4 - k_5 + k_6 - k_7 - k_8) & -1 & \chi^- & -1/2 &  \\
&  &  &  & \{0,-2\} \\
\frac{1}{2} (-k - k_4 - k_5 + k_6 + k_7 - k_8) & 0 & \nep & -1/2 &  \\
\frac{1}{2} (-k - k_4 - k_5 + k_6 - k_7 + k_8) & 0 & \nmp & -1/2 &  \\
&  &  &  &  \\ \hline
&  &  &  &  \\
-\frac{1}{2} (k - k_4 - k_5 + k_6 + k_7 + k_8) & 1 & \tau^+ & 1/2 &  \\
-\frac{1}{2} (k - k_4 - k_5 + k_6 - k_7 - k_8) & 1 & \chi^+ & 1/2 &  \\
&  &  &  & \{0,2\} \\
-\frac{1}{2} (-k - k_4 - k_5 + k_6 + k_7 - k_8) & 0 & \nem & 1/2 &  \\
-\frac{1}{2} (-k - k_4 - k_5 + k_6 - k_7 + k_8) & 0 & \nmm & 1/2 &  \\
&  &  &  &  \\ \hline
&  &  &  &  \\
\frac{1}{2} (k - k_4 - k_5 - k_6 + k_7 - k_8) & -1 & \mu^- & -1/2 &  \\
\frac{1}{2} (k - k_4 - k_5 - k_6 - k_7 + k_8) & -1 & e^- & -1/2 &  \\
\frac{1}{2} (-k - k_4 - k_5 - k_6 + k_7 + k_8) & 0 & \ncp & -1/2 &  \\
\frac{1}{2} (-k - k_4 - k_5 - k_6 - k_7 - k_8) & 0 & \ntp & -1/2 &  \\
&  &  &  & \{0,0\} \\
-\frac{1}{2} (k - k_4 - k_5 - k_6 + k_7 - k_8) & 1 & \mu^+ & 1/2 &  \\
-\frac{1}{2} (k - k_4 - k_5 - k_6 - k_7 + k_8) & 1 & e^+ & 1/2 &  \\
-\frac{1}{2} (-k - k_4 - k_5 - k_6 + k_7 + k_8) & 0 & \ncm & 1/2 &  \\
-\frac{1}{2} (-k - k_4 - k_5 - k_6 - k_7 - k_8) & 0 & \ntm & 1/2 &  \\
&  &  &  &  \\ \hline
\end{array}
}}
\end{equation*}%
\caption{The lepton families with their spin-z.}
\label{t:leptons}
\end{table}

\begin{table}[hp]
\begin{equation*}
{\scriptsize {\
\begin{array}{|c|c|c|c|c|}
\hline
&  &  &  &  \\
roots\ \left[\ k^\prime := k_1-k_2-k_3\ \right] & q_{e.m.} & blue\ quark &
\sigma_z:=\frac12(\as,\rho_1+\rho_2) & \{r_f,s_f\} \\
&  &  &  &  \\ \hline
&  &  &  &  \\
\frac{1}{2} (k^\prime + k_4 - k_5 - k_6 + k_7 + k_8) & {\scriptstyle 1/3} &
\bar b^\prime & -{\scriptstyle 1/2} &  \\
\frac{1}{2} (k^\prime + k_4 - k_5 - k_6 - k_7 - k_8) & {\scriptstyle 1/3} &
\bar B^\prime & -{\scriptstyle 1/2} &  \\
&  &  &  & \{1,1\} \\
\frac{1}{2} (-k^\prime + k_4 - k_5 - k_6 + k_7 - k_8) & {\scriptstyle 2/3} &
u & -{\scriptstyle 1/2} &  \\
\frac{1}{2} (-k^\prime + k_4 - k_5 - k_6 - k_7 + k_8) & {\scriptstyle 2/3} &
c & -{\scriptstyle 1/2} &  \\
&  &  &  &  \\ \hline
&  &  &  &  \\
-\frac{1}{2} (k^\prime + k_4 - k_5 - k_6 + k_7 + k_8) & -{\scriptstyle 1/3}
& b^\prime & {\scriptstyle 1/2} &  \\
-\frac{1}{2} (k^\prime + k_4 - k_5 - k_6 - k_7 - k_8) & -{\scriptstyle 1/3}
& B^\prime & {\scriptstyle 1/2} &  \\
&  &  &  & \{-1,-1\} \\
-\frac{1}{2} (-k^\prime + k_4 - k_5 - k_6 + k_7 - k_8) & -{\scriptstyle 2/3}
& \bar u & {\scriptstyle 1/2} &  \\
-\frac{1}{2} (-k^\prime + k_4 - k_5 - k_6 - k_7 + k_8) & -{\scriptstyle 2/3}
& \bar c & {\scriptstyle 1/2} &  \\
&  &  &  &  \\ \hline
&  &  &  &  \\
\frac{1}{2} (k^\prime - k_4 + k_5 - k_6 + k_7 + k_8) & -{\scriptstyle 2/3} &
\bar t & {\scriptstyle 1/2} &  \\
\frac{1}{2} (k^\prime - k_4 + k_5 - k_6 - k_7 - k_8) & -{\scriptstyle 2/3} &
\bar T & {\scriptstyle 1/2} &  \\
&  &  &  & \{-1,1\} \\
\frac{1}{2} (-k^\prime - k_4 + k_5 - k_6 + k_7 - k_8) & -{\scriptstyle 1/3}
& d^\prime & {\scriptstyle 1/2} &  \\
\frac{1}{2} (-k^\prime - k_4 + k_5 - k_6 - k_7 + k_8) & -{\scriptstyle 1/3}
& s^\prime & {\scriptstyle 1/2} &  \\
&  &  &  &  \\ \hline
&  &  &  &  \\
-\frac{1}{2} (k^\prime - k_4 + k_5 - k_6 + k_7 + k_8) & {\scriptstyle 2/3} &
t & -{\scriptstyle 1/2} &  \\
-\frac{1}{2} (k^\prime - k_4 + k_5 - k_6 - k_7 - k_8) & {\scriptstyle 2/3} &
T & -{\scriptstyle 1/2} &  \\
&  &  &  & \{1,-1\} \\
-\frac{1}{2} (-k^\prime - k_4 + k_5 - k_6 + k_7 - k_8) & {\scriptstyle 1/3}
& \bar d^\prime & -{\scriptstyle 1/2} &  \\
-\frac{1}{2} (-k^\prime - k_4 + k_5 - k_6 - k_7 + k_8) & {\scriptstyle 1/3}
& \bar s^\prime & -{\scriptstyle 1/2} &  \\
&  &  &  &  \\ \hline
&  &  &  &  \\
\frac{1}{2} (k^\prime - k_4 - k_5 + k_6 + k_7 + k_8) & -{\scriptstyle 2/3} &
\bar t & -{\scriptstyle 1/2} &  \\
\frac{1}{2} (k^\prime - k_4 - k_5 + k_6 - k_7 - k_8) & -{\scriptstyle 2/3} &
\bar T & -{\scriptstyle 1/2} &  \\
&  &  &  & \{0,-2\} \\
\frac{1}{2} (-k^\prime - k_4 - k_5 + k_6 + k_7 - k_8) & -{\scriptstyle 1/3}
& d^\prime & -{\scriptstyle 1/2} &  \\
\frac{1}{2} (-k^\prime - k_4 - k_5 + k_6 - k_7 + k_8) & -{\scriptstyle 1/3}
& s^\prime & -{\scriptstyle 1/2} &  \\
&  &  &  &  \\ \hline
&  &  &  &  \\
-\frac{1}{2} (k^\prime - k_4 - k_5 + k_6 + k_7 + k_8) & {\scriptstyle 2/3} &
t & {\scriptstyle 1/2} &  \\
-\frac{1}{2} (k^\prime - k_4 - k_5 + k_6 - k_7 - k_8) & {\scriptstyle 2/3} &
T & {\scriptstyle 1/2} &  \\
&  &  &  & \{0,2\} \\
-\frac{1}{2} (-k^\prime - k_4 - k_5 + k_6 + k_7 - k_8) & {\scriptstyle 1/3}
& \bar d^\prime & {\scriptstyle 1/2} &  \\
-\frac{1}{2} (-k^\prime - k_4 - k_5 + k_6 - k_7 + k_8) & {\scriptstyle 1/3}
& \bar s^\prime & {\scriptstyle 1/2} &  \\
&  &  &  &  \\ \hline
&  &  &  &  \\
\frac{1}{2} (k^\prime - k_4 - k_5 - k_6 + k_7 - k_8) & -{\scriptstyle 2/3} &
\bar c & -{\scriptstyle 1/2} &  \\
\frac{1}{2} (k^\prime - k_4 - k_5 - k_6 - k_7 + k_8) & -{\scriptstyle 2/3} &
\bar u & -{\scriptstyle 1/2} &  \\
\frac{1}{2} (-k^\prime - k_4 - k_5 - k_6 + k_7 + k_8) & -{\scriptstyle 1/3}
& B^\prime & -{\scriptstyle 1/2} &  \\
\frac{1}{2} (-k^\prime - k_4 - k_5 - k_6 - k_7 - k_8) & -{\scriptstyle 1/3}
& b^\prime & -{\scriptstyle 1/2} &  \\
&  &  &  & \{0,0\} \\
-\frac{1}{2} (k^\prime - k_4 - k_5 - k_6 + k_7 - k_8) & {\scriptstyle 2/3} &
c & {\scriptstyle 1/2} &  \\
-\frac{1}{2} (k^\prime - k_4 - k_5 - k_6 - k_7 + k_8) & {\scriptstyle 2/3} &
u & {\scriptstyle 1/2} &  \\
-\frac{1}{2} (-k^\prime - k_4 - k_5 - k_6 + k_7 + k_8) & {\scriptstyle 1/3}
& \bar B^\prime & {\scriptstyle 1/2} &  \\
-\frac{1}{2} (-k^\prime - k_4 - k_5 - k_6 - k_7 - k_8) & {\scriptstyle 1/3}
& \bar b^\prime & {\scriptstyle 1/2} &  \\
&  &  &  &  \\ \hline
\end{array}
}}
\end{equation*}%
\caption{The flavor families of blue (\{1,1\}) quarks with their spin-z. }
\label{t:quarks}
\end{table}

\end{document}